\newcommand*{\cM}{\mathcal{M}}
\newcommand*{\bI}{\mathbf{I}}
\newcommand*{\bJ}{\mathbf{J}}
\newcommand*{\bK}{\mathbf{K}}
\newcommand*{\bU}{\mathbf{U}}
\newcommand*{\bX}{\mathbf{X}}
\newcommand*{\EE}{\mathbb{E}}
\def\eps{\varepsilon}
\def\del{\delta}
\def\bI{\mathbf{I}}
\newcommand*{\Tmix}{T_{\mathrm{mix}}}
\def\cG{\mathcal {G}}
\def\cI{\mathcal {I}}
\def\1{\mathbf{1}}
\def\lam {\lambda}
\def\tce{t_c + \eps}
\def\tce2{t_c + \frac{\eps}{2}}
\def\bX{\mathbf{X}}
\newcommand*{\sharpP}{\#{}P}
\DeclareMathOperator*{\var}{var}
\DeclareMathOperator*{\cov}{cov}
\newtheorem*{theorem*}{Theorem}
\newtheorem{theorem}{Theorem}
\newtheorem{lemma}[theorem]{Lemma}
\newtheorem{cor}[theorem]{Corollary}
\newtheorem*{defn*}{Definition}
\newtheorem*{prop*}{Proposition}
\newtheorem{conj}[theorem]{Conjecture}
\newtheorem*{conj*}{Conjecture}
\newtheorem{question}[theorem]{Question}
\newtheorem*{fact*}{Fact}
\begin{document}
\title{Approximately Counting Independent Sets of a Given Size in Bounded-Degree Graphs}

\author{Ewan Davies\footnote{Department of Computer Science, University of Colorado Boulder. Corresponding address: Department of Computer Science, Colorado State University.} \and Will Perkins\footnote{Department of Mathematics, Statistics, and Computer Science, University of Illinois Chicago. Supported in part by NSF grants DMS-1847451 and CCF-1934915}}

\date{\today}

\maketitle
\begin{abstract}
    We determine the computational complexity of approximately counting and sampling independent sets of a given size in bounded-degree graphs. 
    That is, we identify a critical density $\alpha_c(\Delta)$ and provide (i) for $\alpha < \alpha_c(\Delta)$ randomized polynomial-time algorithms for approximately sampling and counting independent sets of given size at most $\alpha n$ in $n$-vertex graphs of maximum degree $\Delta$; and (ii) a proof that unless NP=RP, no such algorithms exist for $\alpha>\alpha_c(\Delta)$. 
    The critical density is the occupancy fraction of the hard core model on the complete graph $K_{\Delta+1}$ at the uniqueness threshold on the infinite $\Delta$-regular tree, giving $\alpha_c(\Delta)\sim\frac{e}{1+e}\frac{1}{\Delta}$ as $\Delta\to\infty$. 
    Our methods apply more generally to anti-ferromagnetic 2-spin systems and motivate   new questions in extremal combinatorics.
\end{abstract}

\section{Introduction}

Counting and sampling independent sets in graphs are fundamental computational problems arising in several fields including algorithms, statistical physics, and combinatorics. 
Given a graph $G$, let $\cI(G)$ denote the set of independent sets of $G$. 
The independence polynomial of $G$ is 
\[ Z_G(\lam) = \sum_{I \in \cI(G)} \lam ^{|I|}  = \sum_{k \ge 0} i_k(G) \lam^k  \,, \]
where $i_k(G)$ is the number of independent sets of size $k$ in $G$. 
The independence polynomial also arises as the partition function of the hard-core model from statistical physics.

With $G$ and $\lam$ as inputs, exact computation of $Z_G(\lam)$ is \sharpP-hard~\cite{Val79,Gre00}, but the complexity of approximating $Z_G(\lam)$ has been a major topic in recent theoretical computer science research. 
There is a detailed understanding of the complexity of approximating $Z_G(\lam)$ for the class of graphs of maximum degree $\Delta$, in particular showing that there is a \emph{computational threshold} which coincides with a certain probabilistic phase transition as one varies the value of $\lam$.

The hard-core model on $G$ at fugacity $\lam$ is the probability distribution on $\cI(G)$ defined by
\[ \mu_{G,\lam} (I) = \frac{\lam^{|I|}}{Z_G(\lam)} \,.\]
Defined on a lattice like $\mathbb Z^d$ (through an appropriate limiting procedure), this is a simple model of a gas (the hard-core lattice gas) and it exhibits an order/disorder phase transition as $\lam$ changes.  
The hard-core model can also be defined on the infinite $\Delta$-regular tree (the \emph{Bethe lattice}). Kelly~\cite{Kel85} determined the critical threshold for uniqueness of the infinite volume measure on the tree, namely
\begin{equation}\label{eq:lamc}
    \lam_c(\Delta) = \frac{(\Delta-1)^{\Delta-1}}{(\Delta-2)^\Delta}\,.
\end{equation}
This value of $\lam$  also marks a computational threshold for the complexity of approximating $Z_G(\lam)$ on graphs of maximum degree $\Delta$. 
One can approximate $Z_G(\lam)$ up to a relative error of $\eps$ in time polynomial in $n$ and $1/\eps$ with several different methods, provided $G$ is of maximum degree $\Delta$ and $\lam<\lam_c(\Delta)$.
The first such algorithm is based on correlation decay on trees and is due to Weitz~\cite{Wei06}, but recently alternative algorithms based on polynomial interpolation~\cite{Bar16,PR17,PR19} and Markov chains~\cite{ALG20,CLV20,CLV20a} for this problem have also been given. 
Conversely, for $\lam>\lam_c(\Delta)$ a result of Sly and Sun~\cite{SS14} and Galanis, \v{S}tefankovi\v{c}, and  Vigoda~\cite{galanis2016inapproximability} (following Sly~\cite{Sly10}) states that unless NP=RP there is no polynomial-time algorithm for approximating $Z_G(\lam)$ on graphs of maximum degree $\Delta$. 
Counting and sampling are closely related, and by standard reduction techniques the same computational threshold holds for the problem of approximately sampling independent sets from the hard-core distribution. 

The hard-core model is an example of the \emph{grand canonical ensemble} from statistical physics, where one studies physical systems that can freely exchange particles and energy with a reservoir. 
Closely related is the \emph{canonical ensemble}, where one removes the reservoir and considers a system with a fixed number of particles. In the context of independent sets in graphs, this corresponds to the uniform distribution on independent sets of some given size $k$. 
Here the number $i_k(G)$ of independent sets of size $k$ in $G$ plays the role of the partition function.
In this paper we answer affirmatively the natural question of whether there is a similar computational threshold for the problem of approximating $i_k(G)$, and the related problem of sampling independent sets of size $k$ approximately uniformly.
Analogous to the critical fugacity in the hard-core model, we identify a critical density $\alpha_c(\Delta)$, and for $\alpha < \alpha_c(\Delta)$ we give a fully polynomial-time randomized approximation scheme (FPRAS, defined below) for counting independent sets of  size $k$ in $n$-vertex graphs of maximum degree $\Delta$, where $0\le k\le \alpha n$. 
We also show that unless NP=RP there is no such algorithm for $\alpha > \alpha_c(\Delta)$. 

In statistical physics the grand canonical ensemble and the canonical ensemble are known to be  equivalent in some respects under certain conditions, and the present authors, Jenssen, and Roberts~\cite{DJPR18a} used this idea to give a tight upper bound on $i_k(G)$ for large $k$ in large $\Delta$-regular graphs $G$ (see also~\cite{DJP20} for the case of small $k$). 
Here, the main idea in our proofs is also to exploit the equivalence of ensembles. For algorithms at subcritical densities we approximately sample independent sets from the hard-core model and show that with sufficiently high probability we get an independent set of the desired  size, distributed approximately uniformly. 
For hardness at supercritical densities we construct an auxiliary graph $G'$ such that $i_k(G')$ is approximately proportional to $Z_G(\lam)$ for some $\lam>\lam_c(\Delta)$, and hence is hard to approximate. 
Our counting and sampling algorithms for  independent sets of size $k$ permit higher densities than previous algorithms for this problem based on Markov chains~\cite{BD97,AL20}, and an algorithm implicit in~\cite{DJP20} based on the cluster expansion.

One feature of our methods is the incorporation of several advances from recent research on related topics. 
From the geometry of polynomials we use a state-of-the-art zero-free region for $Z_G(\lam)$ due to Peters and Regts~\cite{PR19} and a central limit theorem of Michelen and Sahasrabudhe~\cite{MS19a,MS19} (though an older result of Lebowitz, Pittel, Ruelle and Speer~\cite{LPRS16} would also suffice), and we also apply the very recent development that a natural Markov chain (the Glauber dynamics)  for sampling from the hard-core model  mixes rapidly at fugacities $\lam < \lam_c(\Delta)$ on all graphs of maximum degree at most $\Delta$~\cite{AL20,CLV20a}.  Finally, our results also show a connection between these algorithmic and complexity-theoretic problems and extremal combinatorics problems for bounded-degree graphs~\cite{CR14,DJPR18a,DJP20}, see also the survey~\cite{zhao2017extremal}.

Due to the combinatorial significance of independent sets, the case of the hard-core model is a natural starting point for our work establishing a computational threshold for the problem of approximating coefficients of  partition functions. 
There are other well-studied models for which it is interesting to consider an analogous computational problem, and we show that our methods also apply  to the case of anti-ferromagnetic 2-spin models. 
\subsection{Preliminaries}

Given an error parameter $\eps$ and real numbers $z$, $\hat z$, we say that $\hat z$ is a \emph{relative $\eps$-approximation} to $z$ if $e^{-\eps}\le \hat z/z \le e^\eps$. 
A \emph{fully polynomial-time randomized approximation scheme} or \emph{FPRAS} for a counting problem is a randomized algorithm that with probability at least $3/4$ outputs a relative $\eps$-approximation to the solution of the problem in time polynomial in the size of the input and $1/\eps$. 
If the algorithm is deterministic (i.e.\ succeeds with probability 1) then  it is a \emph{fully polynomial-time approximation scheme} (\emph{FPTAS}).
An $\eps$-approximate sampling algorithm for a probability distribution $\mu$ outputs a random sample from a distribution $\hat\mu$ such that the total variation distance $\Vert\mu-\hat\mu\Vert_{TV} \le \eps$, and an \emph{efficient sampling scheme} is, for all $\eps>0$ an $\eps$-approximate sampling algorithm which runs in time polynomial in the size of the input and $\log(1/\eps)$. 
Note that approximate sampling schemes whose running times are polynomial in $1/\eps$ or in $\log(1/\eps)$ are common in the literature, but we adopt the stronger definition  for this paper. 
The inputs to our algorithms are graphs, and input size corresponds to the number of vertices of the graph.

An \emph{independent set} in a graph $G=(V,E)$ is a subset $I\subset V$ such that no edge of $E$ is contained in $I$. 
The \emph{density} of such an independent set $I$ is $|I|/|V|$, and it will be convenient for us to parametrize independent sets by their density instead of their size. 
We write $\cI(G)$ for the set of all independent sets in $G$, $\cI_k(G)$ for the set of independent sets of size $k$ in $G$, and $i_k(G)=|\cI_k(G)|$ for the number of such sets. 
Recall the hard-core distribution $\mu_{G,\lam}$ on $\cI(G)$ is given by $\mu_{G,\lam}(I) = \lam^{|I|}/Z_G(\lam)$.
We also define the \emph{occupancy fraction} $\alpha_G(\lam)$ of the hard-core model on $G$ at fugacity $\lam$ to be the expected density of a random independent set drawn according to $\mu_{G,\lam}$.
A simple calculation gives that 
\[ \alpha_G(\lam) = \frac{1}{|V(G)|}\frac{\lam Z'_G(\lam)}{Z_G(\lam)}. \]
Let $\cG_\Delta$ be the set of $\Delta$-regular graphs and $\cG_{\le\Delta}$ be the set of graphs of maximum degree $\Delta$.

The critical density that we show constitutes a computational threshold for the problems of counting and sampling independent sets of a given size in graphs in $\cG_\Delta$ or $\cG_{\le\Delta}$ is
\[ \alpha_c(\Delta) = \frac{\lam_c(\Delta)}{1+(\Delta+1)\lam_c(\Delta)} = \frac{(\Delta-1)^{\Delta-1}}{(\Delta-2)^{\Delta}+(\Delta+1)(\Delta-1)^{\Delta-1}}\,, \]
with $\lam_c$ the critical fugacity as in~\eqref{eq:lamc}. This may seem unexpected at first sight, but has a natural interpretation. 
The threshold is in fact the quantity $\alpha_{K_{\Delta+1}}(\lam_c(\Delta))$, the occupancy fraction of the complete graph on $\Delta+1$ vertices at the critical fugacity $\lam_c(\Delta)$. 
This is a natural threshold because the occupancy fraction is a monotone increasing function of $\lam$, and the complete graph on $\Delta+1$ vertices has the minimum occupancy fraction over all graphs of maximum degree $\Delta$. 
Thus, the value of $\lam$ which makes $\alpha_G(\lam)>\alpha_c(\Delta)$ for all $G\in\cG_{\le \Delta}$ must be greater than $\lam_c(\Delta)$. Conversely, if $\alpha < \alpha_c(\Delta)$ then for every graph $G\in\cG_{\le \Delta}$ there is some $\lam<\lam_c(\Delta)$ such that $\alpha_G(\lam) = \alpha$.

\subsection{Main results}

We are now ready to state our main result.
\begin{theorem}\label{thm:main}\hfill
\begin{enumerate}[label=\textup{(\alph*)}]
    \item\label{itm:algs} For every $\alpha < \alpha_c(\Delta)$ there is an FPRAS for $i_{\lfloor \alpha n \rfloor}(G)$ and an efficient sampling scheme for the uniform distribution on $\cI_{\lfloor \alpha n \rfloor}(G)$ for $n$-vertex graphs $G$ of maximum degree $\Delta$.
    \item\label{itm:hardness} Unless \textup{NP=RP}, for every $\alpha  \in (\alpha_c(\Delta),1/2)$ there is no FPRAS for $i_{\lfloor \alpha n \rfloor}(G)$  for $n$-vertex, $\Delta$-regular graphs $G$.
\end{enumerate}
\end{theorem}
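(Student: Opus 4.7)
The plan is to exploit the equivalence of ensembles sketched in the introduction. For part~\ref{itm:algs} the key observation is that conditioning the hard-core distribution $\mu_{G,\lam}$ on $|I|=k$ yields exactly the uniform distribution on $\cI_k(G)$, because $\mu_{G,\lam}(I)$ depends on $I$ only through $|I|$. Given $\alpha<\alpha_c(\Delta)$, by monotonicity and continuity of $\lam\mapsto\alpha_G(\lam)$ together with the fact (noted in the preliminaries) that $\alpha_G(\lam_c(\Delta))\ge\alpha_c(\Delta)$ for every $G\in\cG_{\le\Delta}$, there is a $\lam^*=\lam^*(G)<\lam_c(\Delta)$ with $\alpha_G(\lam^*)=\alpha$. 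First I would locate $\lam^*$ to within $O(1/\sqrt n)$ by binary search, estimating $\alpha_G(\lam)$ at each trial value by averaging sample sizes from the rapidly-mixing Glauber dynamics sampler of~\cite{AL20,CLV20a}.

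Next, the Peters--Regts zero-free disk~\cite{PR19} places $\lam^*$ strictly inside a neighborhood of $[0,\lam_c(\Delta))$ on which $Z_G$ has no zeros, so the central limit theorem of Michelen--Sahasrabudhe~\cite{MS19a,MS19} (or the older~\cite{LPRS16}) gives a local Gaussian approximation for the law of $|I|$ under $\mu_{G,\lam^*}$, with variance $\Omega(n)$. Together with the tuning of $\lam^*$ this yields the sharp estimate
\[
\pr_{\mu_{G,\lam^*}}\bigl(|I|=k\bigr)=\Theta(1/\sqrt n).
\]
An efficient uniform sampler for $\cI_k(G)$ now follows by rejection from $\mu_{G,\lam^*}$ (in expected $O(\sqrt n)$ draws), and the FPRAS follows from the identity
\[
i_k(G)=Z_G(\lam^*)\cdot\pr_{\mu_{G,\lam^*}}(|I|=k)\cdot(\lam^*)^{-k},
\]
with $Z_G(\lam^*)$ approximated by Weitz's algorithm~\cite{Wei06} and $\pr(|I|=k)$ estimated from $\tilde O(\sqrt n /\eps^2)$ samples.

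For part~\ref{itm:hardness} the plan is to reduce approximating $Z_H(\lam)$ at a fixed $\lam>\lam_c(\Delta)$ on a $\Delta$-regular graph $H$ to approximating $i_k(G')$ for a $\Delta$-regular $G'$ built from $H$ by disjoint union with $m$ copies of a gadget. With the canonical gadget $K_{\Delta+1}$ one computes
\[
i_k(G')=\sum_{j=0}^{k}i_j(H)\binom{m}{k-j}(\Delta+1)^{k-j},
\]
and choosing $k,m$ so that $k/((m-k)(\Delta+1))=\lam$ with $m$ a sufficiently large polynomial in $n=|V(H)|$ gives $\binom{m}{k-j}/\binom{m}{k}\cdot(\Delta+1)^{-j}=\lam^j(1+o(1))$ uniformly in the relevant range $j\le n$, so that
\[
\frac{i_k(G')}{\binom{m}{k}(\Delta+1)^k}=(1+o(1))\,Z_H(\lam).
\]
A hypothetical FPRAS for $i_k(G')$ would then yield one for $Z_H(\lam)$, contradicting~\cite{SS14,galanis2016inapproximability}. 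The density realized by this construction tends to $\alpha_{K_{\Delta+1}}(\lam)$ as $m\to\infty$, which covers $\alpha\in(\alpha_c(\Delta),1/(\Delta+1))$; to reach $\alpha\in[1/(\Delta+1),1/2)$ the gadget $K_{\Delta+1}$ is replaced by a $\Delta$-regular gadget of higher independence ratio (e.g.\ a bipartite gadget modelled on $K_{\Delta,\Delta}$, whose occupancy fraction at fugacity $\lam$ ranges over $(0,1/2)$), and an analogous calculation yields the reduction.

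The main obstacle in part~\ref{itm:algs} is passing from the Peters--Regts zero-free region to the pointwise lower bound $\pr(|I|=k)=\Omega(1/\sqrt n)$ uniformly over $G\in\cG_{\le\Delta}$: this needs a genuine \emph{local} CLT rather than mere convergence in distribution, and a uniform $\Omega(n)$ lower bound on the variance of $|I|$ derived from the zero-free input. The main obstacle in part~\ref{itm:hardness} is handling every $\alpha\in(\alpha_c(\Delta),1/2)$ while keeping $G'$ strictly $\Delta$-regular, which requires a careful selection of gadgets and tight control of the binomial approximation across the full density range.
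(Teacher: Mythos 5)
Your overall strategy for part~\ref{itm:algs} (tune $\lam$, condition, reject, count) matches the paper's, but the central technical step is not settled. You write that the Michelen--Sahasrabudhe/Peters--Regts combination gives $\pr_{\mu_{G,\lam^*}}(|I|=k)=\Theta(1/\sqrt n)$, and you yourself flag that this needs a genuine \emph{local} CLT rather than convergence in distribution. That is precisely the gap: Theorem~\ref{thm:clt} of the paper only gives a distributional CLT, and the pointwise $\Theta(n^{-1/2})$ statement is in fact stated as an open problem (Conjecture~\ref{conjLCLT}) which the authors did not prove; it was resolved only later by Jain, Perkins, Sah, and Sawhney. The paper's Lemma~\ref{lemExistLambda} sidesteps this by proving only the weaker bound $\pr_{\mu_{G,\lam}}(|I|=k)=\Omega(1/(n\log n))$, via an argument you do not have: the distributional CLT gives constant mass on each side of $k$, and since Glauber dynamics changes $|I|$ by $\pm1$ each step and mixes in $O(n\log n)$ steps, a stationary trajectory must pass through size $k$ with probability $\Omega(1)$ over any $O(n\log n)$ window, forcing the pointwise probability bound. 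Without that (or an actual local CLT), your $\Theta(1/\sqrt n)$ claim is unsupported. The weaker $\Omega(1/(n\log n))$ bound would still make your rejection sampler and your counting identity $i_k(G)=Z_G(\lam^*)\pr(|I|=k)(\lam^*)^{-k}$ polynomial-time (albeit slower), so once the probability lower bound is in place your route to the FPRAS — Weitz for $Z_G$ plus sampling to estimate $\pr(|I|=k)$ — would work and is a mild variant of the paper's telescoping/annealing scheme. Separately, note that your $O(1/\sqrt n)$ granularity in the binary search only suffices if you already have the (unproved) local CLT; the paper instead uses a grid of spacing $\Theta(1/n^2)$, matching the $O(n)$ bound on $\partial_\lam\alpha_G$, to land within $1/2$ of the target size.

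For part~\ref{itm:hardness} your reduction is the right shape, but your gadget choice leaves a genuine gap in the density range. Pure copies of $K_{\Delta+1}$ realize $\alpha\in(\alpha_c(\Delta),1/(\Delta+1))$ as you say, and pure copies of $K_{\Delta,\Delta}$ realize $\alpha\in(\alpha_{K_{\Delta,\Delta}}(\lam_c(\Delta)),1/2)$ for $\lam>\lam_c$. These two intervals need not cover $(\alpha_c(\Delta),1/2)$: for large $\Delta$, $1/(\Delta+1)\approx 1/\Delta$ while $\alpha_{K_{\Delta,\Delta}}(\lam_c(\Delta))\approx e/(2\Delta)\approx 1.36/\Delta$, so densities in $(1/(\Delta+1),\alpha_{K_{\Delta,\Delta}}(\lam_c))$ are reachable by neither gadget at any $\lam>\lam_c$. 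The paper closes this by using a \emph{mixed} gadget $H_{a,b}$ (disjoint union of $a$ copies of $K_{\Delta,\Delta}$ and $b$ copies of $K_{\Delta+1}$) whose occupancy fraction interpolates continuously, and explicitly lists the finitely many $(a,b)$ pairs that together cover all $\alpha\in(\alpha_c(\Delta),1/2)$. Your binomial identity and the uniform $\lam^j(1+o(1))$ approximation for the clique gadget are correct and are a special case of the Gnedenko local CLT the paper invokes, but a matching argument must be given for the mixed gadget; in the paper this is done once via Gnedenko's theorem applied to the i.i.d.\ sum over $r$ copies of $H_{a,b}$.

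Summary: both parts are on the right track and share the paper's reduction structure, but part~\ref{itm:algs} relies on an unproved local CLT (where the paper has a different argument yielding a weaker but sufficient bound), and part~\ref{itm:hardness} fails to cover a subinterval of densities because it does not use mixed gadgets.
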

The assumption NP$\ne$RP, which is that randomized polynomial-time algorithms cannot solve NP-hard problems, is standard in computational complexity theory. Indeed, this assumption is used in~\cite{Sly10,SS14,galanis2016inapproximability} to show hardness of approximation for $Z_G(\lam)$ on regular graphs at supercritical fugacities, which we apply directly.  The upper bound of $1/2$ on $\alpha$ in~\ref{itm:hardness} is required since in a regular graph (of degree $\ge 1$) there are no independent sets of density greater than $1/2$ and counting those of density $1/2$ amounts to counting connected components in a bipartite graph (which can be done in polynomial time). For graphs of maximum degree $\Delta$ there is no such barrier, and in this case our methods can also be used to prove~\ref{itm:hardness} for $\alpha\in(\alpha_c(\Delta),1)$.

On the algorithmic side, Bubley and Dyer~\cite{BD97} showed via path coupling that a natural Markov chain for sampling independent sets of size $k$ in $n$-vertex graphs of maximum degree $\Delta$ mixes rapidly when $k< n/(2\Delta+2)$, and recently this was slightly improved to $k < n/(2\Delta)$ via the method of high-dimensional expanders by Alev and Lau~\cite{AL20} (who also gave an improved bound in terms of the smallest eigenvalue of the adjacency matrix of $G$). The fast mixing of this Markov chain provides a randomized algorithm for approximate sampling and an FPRAS for approximate counting for this range of $k$.
Implicit in the work of the present authors and Jenssen~\cite{DJP20} is an alternative method based on the cluster expansion that yields an FPTAS for  $i_k(G)$ when $k < e^{-5}n/(\Delta+1)$, and although we did not try to optimize the constant it seems unlikely that without significant extension the cluster expansion approach could yield a sharp result.
Considering asymptotics as $\Delta \to \infty$, these previous algorithms work for densities up to $(c+o(1))/\Delta$ with the constant $c$ being $1/2$ or $e^{-5}\approx 0.007$ respectively. 
Here, our algorithms work for densities $\alpha$ satisfying
\[ \alpha < \alpha_c(\Delta) = ( 1 +o(1)) \frac{e }{1+e}\frac{1}{\Delta}\,, \]
as $\Delta\to\infty$. 
The constant $e/(1+e)$ is approximately $0.731$, and our hardness proof shows that $\alpha_c(\Delta)$ is a tight threshold.

Our sampling algorithm is based on searching over possible values of $\lam$ until we find one for which the expected size of an independent set from the hard-core model is close to the target $k$.  We then resort to rejection sampling to obtain an independent set of the desired size: we repeatedly sample from the hard-core model and output the first independent set of size $k$ that is produced.  Our approximate counting algorithm is based on a standard reduction of approximate counting to approximate sampling.   

The method of sampling from the canonical ensemble by sampling from the grand canonical ensemble and conditioning on obtaining an object of the desired size is quite old and appears in the seminal papers of Jerrum and Sinclair~\cite{jerrum1989approximating,JS93}.  The key technical step in applying this method is to prove a lower bound on the probability of obtaining the desired size; in~\cite{jerrum1989approximating,JS93} this is accomplished by using log-concavity of  the specific distribution on sizes.  Since the hard-core model does not have this property in general, we need a different argument; our new argument is based on the rapid mixing of Glauber dynamics.

 Harris and Kolmogorov~\cite{HK20} have recently investigated the general  problem of estimating  the coefficients of  partition functions given access to samples from the corresponding Gibbs distribution.  Applying their ideas and results to this problem could likely lead to more efficient run times for the problem of approximating the entire sequence $ \{ i_k(G) \}$,  $ 1\le k \le \alpha n$.  Their ideas could also be applied to improve the efficiency of the reduction from counting to sampling, as they describe a more efficient `cooling schedule' for the simulated annealing technique that we use for this type of reduction.

As an additional application of our techniques we find an approximate computational threshold for the class of triangle-free graphs.
\begin{theorem}\label{thmTriangleFree}
For every $\delta >0$ there is $\Delta_0$ large enough so that the following is true.
\begin{enumerate}[label=\textup{(\alph*)}]
\item\label{itmTriangleFreeAlgs}
    For $\Delta \ge \Delta_0$ and $\alpha < \frac{1-\delta}{\Delta}$ there is an FPRAS and efficient sampling scheme for $i_{\lfloor \alpha n \rfloor}(G)$ for the class of triangle-free graphs of maximum degree $\Delta$.   
\item\label{itmTriangleFreeHardness}
    For $\Delta \ge \Delta_0$ and $\alpha \in \left(\frac{1+\delta}{\Delta},1/2\right)$ there is no FPRAS for $i_{\lfloor \alpha n \rfloor}(G)$ for the class of triangle-free graphs of maximum degree $\Delta$.  
\end{enumerate}
\end{theorem}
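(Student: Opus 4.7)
The strategy is to run the same proof as Theorem \ref{thm:main}, replacing each graph-class-specific ingredient with its triangle-free counterpart: the efficient sampling scheme for the hard-core model, the NP-hardness result for the hard-core partition function, and the minimum occupancy fraction bound over the class. The threshold shifts from $\alpha_c(\Delta)\sim e/((1+e)\Delta)$ to the claimed $(1\pm\delta)/\Delta$ because the minimum of $\alpha_G(\lam)$ over triangle-free $G\in\cG_{\le\Delta}$ is much larger than $\alpha_{K_{\Delta+1}}(\lam)$ in the regime $\lam=\Theta(1/\Delta)$; the extremal triangle-free graph (for instance $K_{\Delta,\Delta}$ among regular graphs, by results of \cite{DJPR18a}) has occupancy fraction of order $\lam$ for small $\lam$, so calibrating density to fugacity pushes the density threshold up to order $1/\Delta$.

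For part \ref{itmTriangleFreeAlgs}, given $\alpha<(1-\delta)/\Delta$, I would binary-search for a fugacity $\lam$ with $\alpha_G(\lam)=\alpha$; by the triangle-free minimum occupancy fraction bound this $\lam$ satisfies $\lam\le\lam_0(\delta)/\Delta$ for some constant $\lam_0(\delta)$ independent of $\Delta$. Since the hard-core model on triangle-free graphs of maximum degree $\Delta$ admits an efficient sampler for $\lam$ up to $c\log\Delta/\Delta$ for some $c>0$ (via polymer / correlation decay methods tailored to triangle-free graphs), for $\Delta\ge\Delta_0(\delta)$ our $\lam$ is well within the sampling regime. One then samples from $\mu_{G,\lam}$, conditions on the sampled set having size exactly $\lfloor\alpha n\rfloor$, and bounds the probability of the conditioning event using the concentration tools (Peters--Regts zero-free region, Michelen--Sahasrabudhe CLT) and the Glauber-dynamics-based lower bound on point probabilities used in the proof of Theorem \ref{thm:main}\ref{itm:algs}; the reduction from counting to sampling is standard.

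For part \ref{itmTriangleFreeHardness}, I would adapt the reduction of Theorem \ref{thm:main}\ref{itm:hardness}, starting from an NP-hardness result for approximating $Z_G(\hat\lam)$ on triangle-free $\Delta$-regular graphs (obtainable by checking that the Sly--Sun / Galanis--\v{S}tefankovi\v{c}--Vigoda hardness gadgets can be realized as triangle-free, e.g.\ by using large-girth $\Delta$-regular expanders as the base instance). Given such hard instances, I would attach triangle-free density-shifting gadgets (e.g.\ disjoint copies of $K_{\Delta,\Delta}$) to produce an auxiliary graph $G'$ such that $i_{\lfloor\alpha n'\rfloor}(G')$ is approximately proportional to $Z_G(\hat\lam)$ for appropriately calibrated $\alpha$ and $\hat\lam$. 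The main obstacle is matching the constants in the two directions so that the density thresholds in (a) and (b) both equal $1/\Delta$ to leading order in $\Delta$: one needs the tight triangle-free minimum occupancy fraction bound, the triangle-free sampler at sufficiently large $\lam$, and the triangle-free hardness at sufficiently small $\lam$ to all align asymptotically, and verifying that triangle-freeness is preserved throughout the hardness reduction is the main new technical ingredient beyond the proof of Theorem \ref{thm:main}.
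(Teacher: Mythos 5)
Your high-level strategy---swap each graph-class-specific ingredient for its triangle-free counterpart---is the right framework, but two of the three substitutions you propose are off, and the second one is a genuine gap that would prevent the proof from reaching the claimed threshold.

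For part~\ref{itmTriangleFreeAlgs}, you do not need (and the paper does not use) a triangle-free-specific sampler valid for $\lam$ up to $c\log\Delta/\Delta$. The bound from~\cite{DJPR17a} is applied at $\lam_* = \lam_c(\Delta) - 1/\Delta^2$, which is \emph{strictly below} $\lam_c(\Delta)$, and it shows $\alpha_G(\lam_*)\ge (1-\delta)/\Delta$ for all triangle-free $G\in\cG_{\le\Delta}$ once $\Delta$ is large. Hence the fugacity needed to hit density $\alpha < (1-\delta)/\Delta$ lies in $(0,\lam_*]\subset(0,\lam_c(\Delta))$, and the \emph{general} bounded-degree Glauber dynamics result (Theorem~\ref{thmSampleLambda}) already applies. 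Invoking a hypothetical sampler at $\lam\approx\log\Delta/\Delta$ (which is above $\lam_c(\Delta)\approx e/\Delta$) is both unnecessary and a distraction: the whole point of the argument is that triangle-freeness improves the occupancy-fraction lower bound, not the sampling range.

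For part~\ref{itmTriangleFreeHardness}, your choice of $K_{\Delta,\Delta}$ as the density-shifting gadget does not work. A direct computation gives $\alpha_{K_{\Delta,\Delta}}(\lam) = \lam(1+\lam)^{\Delta-1}/\bigl(2(1+\lam)^\Delta-1\bigr)$, and at $\lam=\lam_c(\Delta)\sim e/\Delta$ this tends to $\tfrac{e\cdot e^e}{(2e^e-1)}\cdot\tfrac{1}{\Delta}\approx \tfrac{1.4}{\Delta}$, which exceeds $\tfrac{1+\delta}{\Delta}$ for small $\delta$. With $K_{\Delta,\Delta}$ as the gadget you would only be able to calibrate $\alpha_{H}(\lam)=\alpha$ at some $\lam>\lam_c(\Delta)$ for $\alpha\gtrsim 1.4/\Delta$, so the hardness you obtain stops far short of $(1+\delta)/\Delta$. (Incidentally, you have the extremal direction inverted: among $\Delta$-regular graphs $K_{\Delta,\Delta}$ tends to \emph{maximize} quantities like $i_k$ and the occupancy fraction, not minimize them.) The ingredient you are missing is a constant-sized $\Delta$-regular \emph{triangle-free} graph $K$ whose occupancy fraction at $\lam_c(\Delta)$ is $(1+o(1))/\Delta$. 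The paper obtains this from a random $\Delta$-regular graph, using the result of Bhatnagar, Sly, and Tetali~\cite{BST16} that the local distribution of the hard-core model on a random $\Delta$-regular graph converges to the translation-invariant measure on the infinite $\Delta$-regular tree, so the occupancy fraction converges to the tree value $\sim 1/\Delta$. One then finds such a $K$ by exhaustive search over graphs on a constant number of vertices. Your observation that the Sly--Sun hardness gadget is already triangle-free is correct and matches the paper.

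\end{document}
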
 

\noindent
The proof of this theorem uses a result on the occupancy fraction of triangle-free graphs from~\cite{DJPR17a}, see Section~\ref{secTriangles}.

\subsection{Extensions to anti-ferromagnetic \texorpdfstring{$2$}{2}-spin systems}
\label{secIntroanti-ferro}

The hard-core model is one important example of an \emph{anti-ferromagnetic 2-spin system}.  Another example is the anti-ferromagnetic Ising model (with or without an external field).  Extensions of the results of Weitz and Sly establish the existence of a critical threshold (in the external field) for the efficient approximability of the partition function of an anti-ferromagnetic 2-spin system on a bounded degree graph~\cite{LLY13,SS14,SST14,GSV15}.  Given Theorem~\ref{thm:main}, one can ask if a similar phenomenon occurs for the problem of approximating individual coefficients of the partition function.  

We first specialize to the case of $\Delta$-regular graphs, in which case all anti-ferromagnetic 2-spin systems can be captured by either the hard-core model or the anti-ferromagnetic Ising model (see e.g.,~\cite{SST14}).   The  Ising model is defined by the partition function 
 \[ Z_G(B,\lam) := \sum_{\sigma : V(G)\to \{+,-\}}B^{m_G(\sigma)}\lam^{n_G(+,\sigma)}, \]
 where $\sigma$ is an assignment of a spin in $\{+,-\}$ to each vertex of $G$, $m_G(\sigma)$ is the number of monochromatic edges of $G$ under $\sigma$ (meaning that both of the endpoints get the same spin), and $n_G(+,\sigma) = |\sigma^{-1}(+)|$ is the number of vertices with spin $+$. 
The Ising model is anti-ferromagnetic if $B \in (0,1)$, and without loss of generality we may consider the case $0 < \lam  \le 1$ as the model is symmetric under the transformation $\lam \mapsto 1/\lam$. 
We define the measure $\mu_{G,B,\lam}$ on spin assignments of $G$ given by 
\[\mu_{G,B,\lam}(\sigma)= \frac{B^{m_G(\sigma)}\lam^{n_G(+,\sigma)}}{Z_G(B,\lam) }  \,. \]

The Ising model can be defined on the infinite $\Delta$-regular tree via the DLR equations.  There is a region in the space of parameters $(\Delta, B, \lam)$ such that the corresponding Gibbs measure is unique. 
For $\Delta \le 2$ the measure is unique for all $B,\lam$, so we assume $\Delta\ge 3$. 
Let $B_c(\Delta) = (\Delta-2)/\Delta$.   For $B_c < B \le 1$ the Gibbs measure is unique for all choices of $\lam$.
For $0 < B < B_c$ the uniqueness region is given by $0 < \lam \le \lam_c(B,\Delta)$ for some critical value $\lam_c(B,\Delta)$ (given by the solution to an explicit equation), and correspondingly we have non-uniqueness for $\lam_c < \lam \le 1$. 
See e.g.~\cite{Geo11} for a derivation of these parameter regions.
The results of Weitz~\cite{Wei06}, Sinclair, Srivastava, and Thurley~\cite{SST14}, Sly and Sun~\cite{SS14}, and Galanis, {\v S}tefankovi{\v c}, and Vigoda~\cite{GSV15} establish that for fixed $0 < B < B_c$, the value $\lam_c(B,\Delta)$ is a computational threshold for the problem of approximating $Z_G(B,\lam)$ on $\Delta$-regular graphs: there is an FPTAS for the uniqueness region $0 < \lam < \lam_c(B,\Delta)$, and unless NP=RP, there is no FPRAS in the region $\lam_c < \lam \le 1$.

We can generalize the problem we study for the hard-core model and consider the analogous problem of approximating coefficients of the Ising model partition function, viewed as a polynomial in $\lam$. Write
\[ Z_G(B, \lam) = \sum_{k=0}^n c_k \lam^k  \]
where the coefficients $c_k = c_k(G,B)$ depend on the graph $G$ and on $B$.  Write $\mathcal B_k(G)$ for the set of $+/-$ assignments to $V(G)$ with exactly $k$ $+$ spins, and let $\nu_{G,B,k}$ be the probability distribution on $\mathcal B_k(G)$ given by
\[  \nu_{G,B,k} (\sigma) = \frac{ B^{m_G(\sigma)}}{ \sum_{\sigma' \in \mathcal B_k(G)} B^{m_G(\sigma')}}    \, . \]
This is the measure $\mu_{G,B,\lam}$ conditioned on the event $\sigma \in \mathcal B_k(G)$.

Define the occupancy fraction $\alpha_G(B,\lam)$ as the expected fraction of $+$ spins in a sample from the distribution $\mu_{G,B,\lam}$.  In particular,
\[\alpha_G(B,\lam)= \frac{\lam}{|V(G)|}\frac{\partial}{\partial \lam} \log Z_G(B,\lam) \,.\]

We can now define the associated extremal problem.  Let 
\[ \alpha_{\mathrm{inf}}( B, \lam, \Delta) = \inf_{G \in \mathcal G_{\Delta}} \alpha_G(B,\lam) \]
be the solution to the extremal problem of minimizing the occupancy fraction over the class $\mathcal G_{\Delta}$ of all $\Delta$-regular graphs.  
Our result for the anti-ferromagnetic Ising model on regular graphs connects the solution to this extremal problem to a computational threshold for approximating $c_k(G,B)$ in $\Delta$-regular graphs.

\begin{theorem}\label{thmanti-ferro}
For $\Delta \ge 3$ and  $B< B_c(\Delta)$,  let $\lam_c = \lam_c (\Delta,B)$ be the uniqueness threshold for the anti-ferromagnetic Ising model on the infinite $\Delta$-regular tree.  Then
\begin{enumerate}[label=\textup{(\alph*)}]
    \item\label{itm:algsgen} For every $\alpha < \alpha_{\mathrm{inf}}(B, \lam_c,\Delta)$ there is an FPRAS for $c_{\lfloor \alpha n \rfloor}(G,B)$ and an efficient sampling scheme for $\nu_{G,B,\lfloor \alpha n \rfloor}$ for all $n$-vertex, $\Delta$-regular graphs $G$.
    \item\label{itm:hardnessgen} Unless \textup{NP=RP}, for every $\alpha \in (\alpha_{\mathrm{inf}}(B, \lam_c,\Delta),1/2]$ there is no FPRAS for $c_{\lfloor \alpha n \rfloor}(G,B)$ for $n$-vertex, $\Delta$-regular graphs $G$.
\end{enumerate}
\end{theorem}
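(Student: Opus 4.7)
The proof plan mirrors that of Theorem~\ref{thm:main}, replacing the hard-core model with the anti-ferromagnetic Ising model and using $\alpha_{\mathrm{min}}(B,\lam_c,\Delta)$ in place of the explicit clique-based threshold $\alpha_c(\Delta)$. The key structural observation is that $\alpha_G(B,\lam)$ is continuous and strictly increasing in $\lam$ on $(0,\infty)$ (its $\lam$-derivative is proportional to the variance of $n_G(+,\sigma)$ under $\mu_{G,B,\lam}$), so $\alpha<\alpha_{\mathrm{min}}(B,\lam_c,\Delta)$ guarantees that for every $G\in\cG_{\le\Delta}$ there exists $\lam_*=\lam_*(G)<\lam_c$ with $\alpha_G(B,\lam_*)=\alpha$, while $\alpha>\alpha_{\mathrm{min}}(B,\lam_c,\Delta)$ guarantees a fixed $\Delta$-regular graph $H$ at which the target density is attained at some supercritical $\lam_H>\lam_c$.

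For part~\ref{itm:algsgen}, I would first locate $\lam_*$ by binary search, using approximate evaluations of $\alpha_G(B,\cdot)$ obtained by sampling from $\mu_{G,B,\lam}$ via Glauber dynamics; rapid mixing in the uniqueness regime for anti-ferromagnetic $2$-spin systems follows from the spectral independence machinery of~\cite{ALG20,CLV20,CLV20a} and its extensions beyond the hard-core case. I would then repeatedly sample from $\mu_{G,B,\lam_*}$ and reject until obtaining a configuration with exactly $k=\lfloor\alpha n\rfloor$ plus spins; conditionally the output is distributed as $\nu_{G,B,k}$. To bound the rejection probability by a polynomial I would invoke a central limit theorem for $n_G(+,\sigma)$ analogous to the one used in part~\ref{itm:algs} of Theorem~\ref{thm:main}: a Peters--Regts-style zero-free region for $Z_G(B,\lam)$ as a polynomial in $\lam$ combined with the Michelen--Sahasrabudhe CLT~\cite{MS19a,MS19} (or the older result of~\cite{LPRS16}) yields a Gaussian limit for $n_G(+,\sigma)$ with standard deviation $\Theta(\sqrt n)$, and hence $\P[n_G(+,\sigma)=k]=\Omega(1/\sqrt n)$. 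The counting FPRAS then follows from the standard simulated-annealing reduction from counting to sampling.

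For part~\ref{itm:hardnessgen}, given $\alpha>\alpha_{\mathrm{min}}(B,\lam_c,\Delta)$ I would fix a $\Delta$-regular graph $H$ and supercritical $\lam_H>\lam_c$ with $\alpha_H(B,\lam_H)=\alpha$, and reduce from the hardness of approximating $Z_G(B,\lam_H)$ on $\Delta$-regular graphs, as established by~\cite{SS14,GSV15}. Given such an input $G$ I would form an auxiliary $\Delta$-regular graph $G'$ by taking the disjoint union of $G$ with $m=\mathrm{poly}(n)$ copies of $H$, chosen so that the typical number of plus spins in a $\mu_{G',B,\lam_H}$-sample concentrates at $\alpha|V(G')|$. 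Using the product factorization $Z_{G'}(B,\lam)=Z_G(B,\lam)\cdot Z_H(B,\lam)^m$ together with the concentration of the plus-spin count on the $H$-copies (again via the CLT and zero-freeness, applied this time to the fixed gadget graph $H$), one can extract a relative $\eps$-approximation of $Z_G(B,\lam_H)$ from any relative $\eps$-approximation of $c_{\lfloor\alpha|V(G')|\rfloor}(G',B)$, contradicting the hardness of the former unless NP=RP.

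The main obstacle is the hardness step: in the hard-core case the unique extremal graph $K_{\Delta+1}$ has an explicit partition function and a transparent gadget construction, whereas here $\alpha_{\mathrm{min}}(B,\lam_c,\Delta)$ is only defined as an infimum over $\cG_\Delta$, so the reduction must use a (not necessarily extremal) near-optimal $H$ and push the concentration of plus spins on the many copies of $H$ all the way through the coefficient extraction. A secondary technical point is to verify that the rapid mixing, zero-freeness, and CLT inputs for the anti-ferromagnetic Ising model extend up to but not including $\lam_c$ on all graphs in $\cG_{\le\Delta}$; this appears to follow from existing results in the $2$-spin literature, but requires careful bookkeeping.
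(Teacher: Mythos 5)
Your proposal takes essentially the same route as the paper: invoke the Ising analogues of the three key inputs — rapid mixing of Glauber dynamics in the uniqueness regime, a zero-free region for $Z_G(B,\lambda)$ containing the interval $[0,\lambda_c)$, and the inapproximability of $Z_G(B,\lambda)$ for $\lambda>\lambda_c$ — and then run the proof of Theorem~\ref{thm:main} verbatim, with the clique $K_{\Delta+1}$ replaced in the hardness gadget by a fixed $\Delta$-regular graph $K$ satisfying $\alpha_K(B,\lambda_c)<\alpha$, which exists by the definition of $\alpha_{\mathrm{min}}$ as an infimum. This is precisely the paper's argument.

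One small imprecision to flag: in the hardness step you say the pointwise bound on the plus-spin count of the gadget copies follows from ``the CLT and zero-freeness, applied this time to the fixed gadget graph $H$.'' For a fixed finite gadget the Peters--Regts/Michelen--Sahasrabudhe machinery is not the right tool; what is actually needed (and what the paper uses, via Theorem~\ref{thm:gnedenko}) is the classical local CLT for sums of i.i.d.\ integer-valued random variables, applied to the $r$ independent copies of $H$ with $r\to\infty$. A mere distributional CLT or a concentration bound would not give the point-probability ratios $\Pr(S_r=k-j)/\Pr(S_r=k)\approx 1$ that the coefficient-extraction step requires. Also, unlike the hard-core case, in the Ising setting the occupancy fraction of any fixed graph tends to $1$ as $\lambda\to\infty$, so a single near-extremal gadget $K$ suffices and there is no need to mix it with $K_{\Delta,\Delta}$ — your single-gadget construction is fine.
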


By the remark above, Theorem~\ref{thm:main} and Theorem~\ref{thmanti-ferro} cover all non-trivial $2$-spin anti-ferromagnetic systems on $\Delta$-regular graphs.  The reason for the upper bound $\alpha\le 1/2$ in~\ref{itm:hardnessgen} is the same as for considering $\lam \le 1$; a result in the complementary range follows by swapping the roles of the spins. 

This theorem provides a direct connection between extremal problems for occupancy fractions  and thresholds for approximating coefficients of partition functions.  To obtain explicit computational thresholds one must solve the relevant extremal problem.  In Section~\ref{sec:further} below we pose a conjecture which would provide an explicit value for the threshold $\alpha_{\mathrm{inf}}(B, \lam_c,\Delta)$ in Theorem~\ref{thmanti-ferro}.

\subsection{Related work}

Counting independent sets of a specified size has arisen in various places as a natural fixed-parameter version of counting independent sets, and is equivalent to counting cliques of a specified size in the complement graph. 
Exact computation of $i_k(G)$ in an $n$-vertex graph $H$ is trivially possible in time $O(k^2n^k)$, though improvements can be made via fast matrix multiplication algorithms (see e.g.~\cite{EG04}). 
Another branch of research concerns the complexity (in both time and number of queries to the graph data structure) of counting and approximately counting cliques. 
For example, in~\cite{ERS20} the authors gave a randomized approximation algorithm for approximating the number of cliques of size $k$.
Results of this kind perform poorly in our setting, which is equivalent to counting cliques in the complement of bounded-degree graphs, because such graphs are very dense. 
In particular, the main result of~\cite{ERS20} has expected running time $\Omega((nk/e)^k)$ in our setting.

With a focus on bounded-degree graphs and connections to statistical physics, our work is closer in spirit to that of Curticapean, Dell, Fomin, Goldberg, and Lapinskas~\cite{curticapean2019fixed}. 
There, the authors consider the problem of counting independent sets of size $k$ in bipartite graphs from the perspective of parametrized complexity. 
They give algorithms for exact computation and approximation of $i_k(G)$ in bipartite graphs (of bounded degree and otherwise), including a fixed parameter tractable randomized approximation scheme, though their running times are exponential in $k$.
We note that the complexity of approximately counting the total number of independent sets in bipartite graphs (a problem known as \#{}BIS) is unknown~\cite{dyer2004relative}.

\subsection{Further questions}
\label{sec:further}

For the hard-core model, the algorithm of Weitz~\cite{Wei06} gives a deterministic approximation algorithm (FPTAS) for $Z_G(\lam)$ for $\lam < \lam_c(\Delta)$.  The approach of Barvinok along with results of Patel and Regts and Peters and Regts give another FPTAS for the same range of parameters~\cite{Bar16,PR17,PR19}.   Our algorithm for approximating the number of independent sets of a given size uses randomness, but we conjecture that there is a deterministic algorithm that works for the same range of parameters. (The cluster expansion approach of~\cite{DJP20} gives an FPTAS but only for smaller values of $\alpha$).
\begin{conj}\label{Qdeterministic}
There is an FPTAS for $i_{\lfloor \alpha n \rfloor}(G)$ for $G \in \cG_{\le\Delta}$  and all $\alpha < \alpha_c(\Delta)$.
\end{conj}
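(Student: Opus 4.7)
I would try to derandomize the FPRAS of Theorem~\ref{thm:main}\ref{itm:algs} via a saddle-point computation, replacing Markov-chain sampling by direct algebraic approximation of the same quantities. Given $\alpha<\alpha_c(\Delta)$, $G\in\cG_{\le\Delta}$ and $k=\lfloor\alpha n\rfloor$, first locate a fugacity $\lambda^*<\lambda_c(\Delta)$ with $\alpha_G(\lambda^*)=k/n$. The function $\lambda\mapsto\alpha_G(\lambda)=\frac{\lambda}{n}\frac{d}{d\lambda}\log Z_G(\lambda)$ is strictly increasing on $(0,\lambda_c(\Delta))$ and can be evaluated to arbitrary relative precision by the deterministic FPTAS of Weitz~\cite{Wei06} or Patel--Regts~\cite{PR17}, so $\lambda^*$ is found by binary search in time polynomial in $n$ and $\log(1/\varepsilon)$. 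The identity
\[
i_k(G) \;=\; \frac{Z_G(\lambda^*)}{(\lambda^*)^k}\,\mu_{G,\lambda^*}(|I|=k)
\]
then reduces the task to a deterministic relative $\varepsilon$-approximation of $Z_G(\lambda^*)(\lambda^*)^{-k}$ (provided by the FPTAS directly) and of the point probability $\mu_{G,\lambda^*}(|I|=k)$.

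\textbf{Approximating the point probability.} Since $\lambda^*<\lambda_c(\Delta)$, $Z_G$ is zero-free in a complex neighbourhood of $\lambda^*$ by Peters--Regts~\cite{PR19}, and the techniques of Michelen--Sahasrabudhe~\cite{MS19,MS19a} yield a local central limit theorem giving $\mu_{G,\lambda^*}(|I|=k)=(1+o(1))/(\sigma\sqrt{2\pi n})$, where $\sigma^2=\var(|I|)/n$. To achieve arbitrary relative precision $\varepsilon$ one pushes this to an Edgeworth expansion
\[
\mu_{G,\lambda^*}(|I|=k) \;=\; \frac{1}{\sigma\sqrt{2\pi n}}\left(1+\sum_{j=1}^{M}\frac{E_j(\kappa_3,\ldots,\kappa_{j+2})}{n^{j/2}}\right) + O\!\bigl(n^{-(M+1)/2}\bigr),
\]
where each $E_j$ is a fixed polynomial in the (suitably normalized) cumulants $\kappa_\ell=(\lambda\partial_\lambda)^\ell\log Z_G\bigr|_{\lambda^*}$. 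Choosing $M=\Theta(\log(n/\varepsilon))$ drives the final error below $\varepsilon$. Each cumulant $\kappa_\ell$ is a derivative of $\log Z_G$ in the zero-free region, and so can itself be computed to inverse-polynomial precision by differentiating the Barvinok--Patel--Regts truncated Taylor expansion~\cite{Bar16,PR17,PR19} underlying the FPTAS for $Z_G$.

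\textbf{Main obstacle.} The crux is producing the Edgeworth expansion above with an explicit polynomial-in-$n$ error term, uniformly over $G\in\cG_{\le\Delta}$ and $\lambda^*\in[0,\lambda_c(\Delta)-\eta]$ for any fixed $\eta>0$. The Michelen--Sahasrabudhe approach splits the characteristic function $\varphi(\theta)=Z_G(\lambda^*e^{i\theta})/Z_G(\lambda^*)$ into a Gaussian regime $|\theta|\lesssim n^{-1/2+\delta}$, where Taylor expansion of $\log Z_G$ in the zero-free region controls $\varphi$ precisely, and a tail regime where one requires an anti-concentration estimate of the form $|\varphi(\theta)|\le e^{-\Omega(n)}$, or at least $|\varphi(\theta)|\le 1-c$, uniformly for $|\theta|$ bounded away from $0$. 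Establishing such a uniform anti-concentration bound for bounded-degree graphs appears to be the main gap: a direct argument looks plausible from zero-freeness in a full complex neighbourhood of $[0,\lambda_c(\Delta)-\eta]$, but no off-the-shelf statement seems strong enough. Once this tail bound is in place, standard Fourier-inversion arguments upgrade the local CLT to the Edgeworth expansion, and all remaining algorithmic ingredients are already deterministic.
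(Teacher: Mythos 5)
This statement is a \emph{conjecture} in the paper, not a theorem, so there is no in-paper proof for your attempt to be compared against. The paper does record that the conjecture was subsequently resolved by Jain, Perkins, Sah, and Sawhney~\cite{jain2021approximate}, and that they did so by proving the strong form of Conjecture~\ref{conjLCLT} --- a quantitative local central limit theorem for $|\mathbf{I}|$ that holds uniformly over $G\in\cG_{\le\Delta}$ and $\lambda<\lambda_c(\Delta)$. Your plan matches that resolution in spirit: write $i_k(G)=Z_G(\lambda^*)(\lambda^*)^{-k}\,\mu_{G,\lambda^*}(|I|=k)$ for the tilt $\lambda^*$ with $\alpha_G(\lambda^*)=k/n$, evaluate the first factor deterministically via Barvinok/Patel--Regts/Peters--Regts interpolation inside the zero-free region, and approximate the point probability by a local CLT or Edgeworth-type expansion whose coefficients are derivatives of $\log Z_G$ and hence computable by the same machinery.

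The gap you flag is genuine and was exactly what was open at the time of writing. Zero-freeness near the real interval controls the characteristic function $\varphi(\theta)=Z_G(\lambda^*e^{i\theta})/Z_G(\lambda^*)$ only in the Gaussian window $|\theta|\lesssim n^{-1/2+\delta}$; it says nothing directly about $|\varphi(\theta)|$ for $|\theta|=\Omega(1)$, and the hard-core distribution is not a sum of independent variables (contrast the product structure exploited in Theorem~\ref{thmAPreduction}, where $|\mathbf I|$ over $rH$ is a sum of $r$ i.i.d.\ copies and Gnedenko's LCLT applies directly). Supplying the uniform anti-concentration estimate needed to close the Fourier inversion is precisely the content of Conjecture~\ref{conjLCLT}, which this paper avoids by switching to Glauber dynamics in Lemma~\ref{lemExistLambda} to obtain the weaker $\Omega(1/(n\log n))$ bound on $\mu_{G,\lambda}(\cI_k(G))$ --- sufficient for a randomized algorithm, but not for derandomization. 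So your proposal correctly locates the crux, and completing it would amount to reproving the main result of~\cite{jain2021approximate}.
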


The Markov chain analyzed in~\cite{BD97,AL20} is the `down/up' Markov chain: starting from an independent set $I_t \in \cI_k(G)$ at step $t$, pick a uniformly random vertex $v \in I_t$ and a uniformly random vertex $w \in V$.  Let $I' = (I_t \setminus v)\cup w$.  If $I' \in \cI_k(G)$, let $I_{t+1} = I'$; if not, let $I_{t+1} = I_t$.   
It is known that the chain is rapidly mixing for $\alpha< 1/(2\Delta)$, see~\cite{BD97,AL20};  we conjecture that this holds up to the hardness threshold.

\begin{conj}
The down/up Markov chain for sampling from $\cI_{\lfloor \alpha n \rfloor}(G)$ mixes rapidly  for $\alpha < \alpha_c(\Delta)$ and all $G \in \cG_{\le\Delta}$.
\end{conj}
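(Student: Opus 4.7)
The plan is to recast the down/up chain on $\cI_k(G)$ as a level walk in the simplicial-complex framework and deduce rapid mixing from spectral independence for the hard-core model. Weight the independent-set complex of $G$ so that each face $I$ has mass $\lam^{|I|}/Z_G(\lam)$. Two canonical walks arise: the top-level up-down walk is (up to a standard Metropolis-versus-heat-bath comparison) the hard-core Glauber dynamics, and the level-$k$ down-up walk, after conditioning on $|I|=k$, is (again up to comparison) the chain in the conjecture. By the local-to-global principle of Kaufman--Oppenheim and Alev--Lau~\cite{AL20}, if this weighted complex is a $\gamma$-local spectral expander with $\gamma=O(1)$ uniformly over $G\in\cG_{\le\Delta}$, then every level walk mixes in time polynomial in $n$.

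The first step is to pick $\lam=\lam_G(\alpha)\in(0,\lam_c(\Delta))$ with $\alpha_G(\lam)=\alpha$; such $\lam$ exists strictly below $\lam_c(\Delta)$ for every $G\in\cG_{\le\Delta}$ by exactly the extremal characterization of $\alpha_c(\Delta)$ used throughout this paper. The second step is to establish spectral independence uniformly over all pinnings of the hard-core measure at fugacity $\lam_G(\alpha)$. Pinning a vertex to the ``out'' state removes it from $G$, while pinning to ``in'' removes it together with its closed neighborhood; both operations preserve maximum degree $\Delta$, and since $\lam_c$ is decreasing in $\Delta$, the hard-core measure on every residual graph stays at fugacity strictly below its own tree-uniqueness threshold. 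The techniques of~\cite{ALG20,CLV20a}, which prove spectral independence for the hard-core model at $\lam<\lam_c(\Delta)$ via Weitz's self-avoiding walk tree and a potential-function argument, should then give a uniform $O(1)$ bound at every link. The third step is to apply local-to-global to conclude rapid mixing of the level-$k$ walk, and hence of the down/up chain via Metropolis comparison.

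The main obstacle is to make the spectral-independence bound uniform in the right sense. The known bounds degrade as $\lam\to\lam_c(\Delta)$, whereas the conjecture asks for uniformity as $\alpha\to\alpha_c(\Delta)$. On any $G\in\cG_{\le\Delta}$ the gap $\alpha_c(\Delta)-\alpha$ enforces a strictly positive gap $\lam_c(\Delta)-\lam_G(\alpha)$, but this latter gap can be very small on graphs close to the extremal $K_{\Delta+1}$-decomposition, and one must argue that the spectral-independence constant is controlled by $\alpha_c(\Delta)-\alpha$ (which is uniform in $G$) rather than by $\lam_c(\Delta)-\lam_G(\alpha)$ (which is not). Verifying this uniformity, and possibly sharpening spectral independence to the modified-log-Sobolev or entropic-independence bounds needed for an efficient polynomial mixing time, is likely the hardest part of the argument, and is the reason we state this only as a conjecture.
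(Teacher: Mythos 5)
This statement is posed as an open conjecture in the paper; no proof is given, and the discussion around it explicitly notes that the best known Markov-chain analysis for this chain (Alev--Lau) reaches only $k<n/(2\Delta)$, far below $\alpha_c(\Delta)n$. So there is nothing in the paper to compare your argument against, and indeed your own proposal correctly concedes at the end that it is incomplete. That said, two points are worth making about your plan.

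First, the ``uniformity'' obstacle you identify is not actually the issue. For fixed $\alpha<\alpha_c(\Delta)$, Theorem~\ref{thmOccMin} says $\alpha_G(\lam)\ge\alpha_{K_{\Delta+1}}(\lam)$ for all $G\in\cG_{\le\Delta}$, so the fugacity $\lam_G(\alpha)$ you choose satisfies $\lam_G(\alpha)\le\lam_*$ where $\lam_*$ solves $\alpha_{K_{\Delta+1}}(\lam_*)=\alpha$. Thus $K_{\Delta+1}$ (or its disjoint unions) is the graph that \emph{maximizes} $\lam_G(\alpha)$, not the graph that pushes it up toward $\lam_c(\Delta)$, and the gap $\lam_c(\Delta)-\lam_G(\alpha)\ge\lam_c(\Delta)-\lam_*>0$ is uniform in $G$. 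So a uniform spectral-independence constant at fugacity $\lam_G(\alpha)$ is available off the shelf from~\cite{ALG20,CLV20a}; you had the monotonicity direction reversed.

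The genuine gap is in the third step, and you do not address it: passing from uniform local spectral expansion of the hard-core weighted complex to rapid mixing of the \emph{level-$k$} down-up walk at $k$ near $\alpha_c(\Delta)n$. The local-to-global inequalities as used in~\cite{AL20} bound the level-$k$ spectral gap by a product over links from level $0$ to level $k-2$, and the resulting estimate degrades as $k$ grows; that machinery is exactly what produces the $k<n/(2\Delta)$ threshold, not the conjectured $\alpha_c(\Delta)n$. What is known (from~\cite{ALG20,CLV20,CLV20a}) is fast mixing of the top-level Glauber dynamics at $\lam<\lam_c(\Delta)$, which corresponds to the grand-canonical chain, not the fixed-$k$ canonical chain. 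Converting spectral independence into polynomial mixing for the canonical (level-$k$) walk up to the tight density is precisely the missing ingredient, and it is not supplied by a ``Metropolis comparison'' or by the Kaufman--Oppenheim/Alev--Lau local-to-global theorem as stated. This is why the statement remains a conjecture.
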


One of the steps of our proof leads to a natural probabilistic conjecture concerning the hard-core model in bounded degree graphs.
\begin{conj}\label{conjLCLT}
Suppose $G$ is a graph on $n$ vertices of maximum degree $\Delta$.  Then if $\lam < \lam_c(\Delta)$ and $k = \lfloor \EE_{G,\lam} | \mathbf I| \rfloor$, we have 
\[ \Pr _{G,\lam} [ |\mathbf I | =k]   = \Omega (n^{-1/2}) \,,  \]
where the implied constant only depends on $\Delta$ and $\lam$ and the expectation and probability are with respect to the hard-core model on $G$ at fugacity $\lam$. 
\end{conj}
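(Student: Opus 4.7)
The plan is to derive the conjecture from a local central limit theorem (LCLT) for $X = |\mathbf I|$ under $\mu_{G,\lam}$, combining the Peters--Regts zero-free region for $Z_G$ with the LCLT of Michelen--Sahasrabudhe already invoked elsewhere in the paper. The probability generating function of $X$ is
\[ g(z) := Z_G(\lam z)/Z_G(\lam), \]
a polynomial of degree at most $n$ with non-negative coefficients whose zeros lie at $z = \zeta/\lam$ for each zero $\zeta$ of $Z_G$. Peters--Regts guarantees that for $\lam < \lam_c(\Delta)$ there is a complex open neighborhood of $[0,\lam]$ on which $Z_G$ is uniformly nonvanishing over $G \in \cG_{\le\Delta}$, so $g$ is zero-free on a closed disk of some radius $\delta = \delta(\lam,\Delta) > 0$ around $z = 1$.

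Given this zero-free property the Michelen--Sahasrabudhe LCLT yields, provided the standard deviation $\sigma := \sqrt{\var(X)}$ tends to infinity as $n \to \infty$,
\[ \sup_{k \in \Z} \left| \sigma \cdot \Pr_{G,\lam}[X = k] - \frac{1}{\sqrt{2\pi}} \, e^{-(k-\mu)^2/(2\sigma^2)} \right| \longrightarrow 0, \]
with $\mu = \E_{G,\lam} X$. Plugging in $k = \lfloor \mu \rfloor$ makes the exponent $O(1/\sigma^2) = o(1)$, so
\[ \Pr_{G,\lam}[X = \lfloor \mu \rfloor] = \frac{1 + o(1)}{\sigma\sqrt{2\pi}}. \]
Since $\sigma = O(\sqrt{n})$ by the upper bound below, this gives $\Omega(n^{-1/2})$ as required.

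The technical core, and what I expect to be the main obstacle, is the two-sided estimate $\sigma^2 = \Theta(n)$ uniformly over $G \in \cG_{\le\Delta}$ at a fixed subcritical $\lam$. The upper bound $\sigma^2 = O(n)$ follows from Weitz-style strong spatial mixing: writing
\[ \sigma^2 = \sum_{v,w \in V(G)} \cov(\1_{v \in \mathbf I}, \1_{w \in \mathbf I}) \]
and using the uniform estimate $\sup_v \sum_w |\cov(\1_{v \in \mathbf I}, \1_{w \in \mathbf I})| = O(1)$ coming from the self-avoiding walk tree recursion. For the matching lower bound I would use the thermodynamic identity $\sigma^2 = n\lam\cdot\partial_\lam\alpha_G(\lam)$ and reduce the problem to showing $\partial_\lam\alpha_G(\lam) \ge c(\lam,\Delta) > 0$ uniformly for $G \in \cG_{\le\Delta}$. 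This looks intuitive in the uniqueness regime but requires genuine work to establish uniformly: one route is to differentiate the self-avoiding walk tree recursion for the vertex marginal and argue that the positive derivative at its attracting fixed point lifts, through the coupling that proves SSM, to a uniform lower bound on $\partial_\lam\alpha_G$; alternatively one could seek an extremal formulation and identify the minimizer of $\partial_\lam\alpha_G(\lam)$ over $\cG_{\le\Delta}$, with the clique $K_{\Delta+1}$ --- already the minimizer of $\alpha_G$ itself --- as the natural candidate.
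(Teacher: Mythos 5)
The key point is that the paper does \emph{not} prove this statement: it is presented as an open conjecture (Conjecture~\ref{conjLCLT}), with the paper only establishing the weaker bound $\Omega(n^{-1}\log^{-1}n)$ in Lemma~\ref{lemExistLambda} via rapid mixing of the Glauber dynamics. Your proposal is a proof strategy for the conjecture rather than a reconstruction of an existing argument, so it must be judged on whether the pieces you invoke are actually available.

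There is a genuine gap at your first step: you attribute a \emph{local} central limit theorem to Michelen and Sahasrabudhe, but the result the paper cites (Theorem~\ref{thm:clt}) is an ordinary CLT, giving convergence of $(X_n - \mu_n)/\sigma_n$ to a standard normal in distribution. A CLT controls $\Pr[a \le (X-\mu)/\sigma \le b]$ for intervals and says nothing about individual point probabilities $\Pr[X = k]$; without a lattice-nondegeneracy argument and a genuine LCLT, the distribution could in principle concentrate on sub-lattices or have periodic fluctuations that a CLT cannot rule out. The paper is explicit about this: it states ``A stronger conjecture is that a local central limit theorem for $|\mathbf I|$ holds whenever $\lam < \lam_c(\Delta)$,'' and it is precisely this missing LCLT that was established later by Jain, Perkins, Sah, and Sawhney~\cite{jain2021approximate}, which is how Conjecture~\ref{conjLCLT} was ultimately resolved. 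Contrast this with the paper's use of Gnedenko's LCLT (Theorem~\ref{thm:gnedenko}) in the hardness reduction, which applies only because there $|\mathbf I|$ is a sum of i.i.d.\ random variables; in the algorithmic direction $|\mathbf I|$ has no such independence structure.

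Your second piece, the variance estimate $\sigma^2 = \Theta(n)$, is more tractable but you misidentify it as the main obstacle. The lower bound $\sigma^2 = \Omega(n)$ is already in the paper (Lemma~\ref{lem:variance}), and the upper bound $\sigma^2 = O(n)$ at subcritical $\lam$ via summable decay of covariances (strong spatial mixing / the self-avoiding walk tree) is a standard argument. So this part of your plan is fine; the real missing ingredient is the LCLT itself. In short, your proposal outlines the strategy that eventually worked in~\cite{jain2021approximate}, but it treats the hardest step --- deducing a local limit theorem from the zero-free region --- as if it were already a black box in the literature, when in fact it was the open problem.
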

Lemma~\ref{lemExistLambda} below gives the weaker bound $\Omega( n^{-1} \log ^{-1} n)$.  A stronger conjecture is that a local central limit theorem for $| \mathbf I|$ holds whenever $\lam < \lam_c(\Delta)$.

Since the extended abstract of this paper appeared at ICALP 2021~\cite{DP21a},  Jain, Perkins, Sah, and Sawhney have resolved Conjectures~\ref{Qdeterministic} and~\ref{conjLCLT} in the affirmative~\cite{jain2021approximate} (in the latter case proving the strong form of the conjecture involving a local central limit theorem).

Next, we conjecture that the explicit value for the computational threshold in Theorem~\ref{thmanti-ferro} is given by the complete graph $K_{\Delta+1}$.  
\begin{conj}\label{conjIsingExtremal}
    Let $\Delta \ge 3$, $B\in(0,B_c( \Delta) )$, and $\lam_c = \lam_c(\Delta, B)$. 
    Then the complete graph $K_{\Delta+1}$ minimizes the occupancy fraction $\alpha_G(B,\lam_c)$ of the anti-ferromagnetic Ising model over the class $\mathcal G_{\Delta}$ of $\Delta$-regular graphs.  That is,
    \[ \alpha_{\mathrm{inf}}(B, \lam_c,\Delta) =  \alpha_{K_{\Delta+1}}(B,\lam_c)  \, .  \]
\end{conj}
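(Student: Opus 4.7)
The natural approach is to adapt the occupancy fraction method of Davies--Jenssen--Perkins--Roberts, which established the analogous extremal statement for the hard-core model, that $K_{\Delta+1}$ minimizes $\alpha_G(\lambda)$ over $\mathcal{G}_\Delta$ at $\lambda=\lambda_c(\Delta)$. The overall plan is to express $\alpha_G(B,\lambda_c)$ as an average over the local view from a uniformly random vertex, relax to a linear program over distributions on local configurations, and identify the optimum with the measure on $K_{\Delta+1}$.

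First, for $G\in\mathcal G_\Delta$ and a uniformly random $v\in V(G)$, expose the spins of the neighbors of $v$. If $k$ of them are $+$, then
\[ \Pr[\sigma(v)=+\mid N(v)]=\frac{\lambda B^{k}}{\lambda B^{k}+B^{\Delta-k}}=:p_\lambda(k), \]
so that $\alpha_G(B,\lambda)=\sum_{k=0}^{\Delta}p_\lambda(k)\,q_G(k)$, where $q_G(k)$ is the probability that $v$ has exactly $k$ plus-spin neighbors under $\mu_{G,B,\lambda}$. This is linear in the joint distribution of the local configuration at $v$.

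Second, set up an LP whose variables are the joint probabilities $\pi(s,T)$ that a uniform $v$ has $\sigma(v)=s$ and neighborhood spin pattern $T\in\{+,-\}^\Delta$. The objective is $\sum_{T}\pi(+,T)$. The constraints we impose are (i) nonnegativity and normalization; (ii) Bayes-consistency $\pi(+,T)/(\pi(+,T)+\pi(-,T))=p_\lambda(\#\{+\text{ in }T\})$, which is forced by the Ising specification at $v$; and (iii) edge-endpoint handshake identities encoding that any $(+,-)$, $(+,+)$, or $(-,-)$ edge must be counted identically from either end. These three families of constraints are all satisfied by the true neighborhood distribution of $\mu_{G,B,\lambda}$ for any $\Delta$-regular $G$, so the LP minimum lower-bounds $\alpha_G(B,\lambda)$ uniformly over $\mathcal G_\Delta$.

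Third, we would verify that, at $\lambda=\lambda_c(\Delta,B)$, the distribution realised by the Ising measure on $K_{\Delta+1}$ is a minimizer of this LP, by exhibiting dual variables. For the clique, the neighborhood of any $v$ is itself a copy of $K_\Delta$, so the conditional law of the neighborhood given $\sigma(v)=s$ is simply the Ising measure on $K_\Delta$ with a shifted external field, and all required expectations have explicit symmetric formulas. The key identity that should pin down the critical parameter is the tree-recursion fixed-point equation defining $\lambda_c(\Delta,B)$: we expect that this equation coincides exactly with a stationarity/KKT condition of the LP along a suitable family of perturbations, making $\lambda_c$ the unique value at which the clique becomes tight.

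The main obstacle is the identification step. In the hard-core case, two features made the argument tractable: only one spin needs to be tracked, and $\lambda_c(\Delta)$ has a particularly clean interaction with certain partial derivatives of $\alpha_G$ in $\lambda$. In the Ising setting, both spins contribute and the local function $p_\lambda(k)$ couples them in a genuinely nonlinear way, so the dual certificate will have to balance competing terms and the closed-form verification that the LP minimum is attained precisely at $K_{\Delta+1}$ will be the most delicate part. A back-up strategy, if a clean LP certificate is out of reach, is to use a calculus-of-variations argument directly on $\alpha_G$ as a function of $\lambda$ along the uniqueness branch of the tree recursion, analogous to the proof of the hard-core extremal bound given via the tree-marginal representation in the bounded-degree setting.
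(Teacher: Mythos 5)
This statement is labeled a \emph{Conjecture} in the paper precisely because no proof is known; the authors explicitly remark that ``[t]he proof of the analogous result for the hard-core model (Theorem~\ref{thmOccMin}) is short and elementary. For the anti-ferromagnetic Ising model the extremal problem seems much more challenging.'' So there is no paper proof to compare against, and your task was in effect to settle an open problem.

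Your proposal does not settle it. What you have written is a reasonable research plan---the local-view/LP relaxation of the occupancy fraction, following the Davies--Jenssen--Perkins--Roberts template---but you stop exactly where the difficulty begins. The entire content of the conjecture is the claim that $K_{\Delta+1}$ is optimal at $\lambda = \lambda_c(\Delta,B)$, and your step three (``we would verify that\ldots the distribution realised by the Ising measure on $K_{\Delta+1}$ is a minimizer of this LP, by exhibiting dual variables'') is a statement of intent rather than an argument. You candidly acknowledge this yourself: ``the closed-form verification that the LP minimum is attained precisely at $K_{\Delta+1}$ will be the most delicate part.'' That delicate part is the whole theorem. Nothing you have written guarantees that the LP (with only Bayes-consistency and handshake constraints) is tight at the clique, nor that the tree fixed-point equation defining $\lambda_c(\Delta,B)$ coincides with a KKT condition of the LP; in the hard-core case this coincidence does occur, but there the local conditional law $p_\lambda(k)$ collapses to an indicator $\mathbf{1}[k=0]\cdot\lambda/(1+\lambda)$, and it is exactly the loss of this degeneracy in the Ising case that makes the problem hard. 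Your ``back-up strategy'' (a calculus-of-variations argument on $\alpha_G$ along the uniqueness branch) is likewise an outline without a mechanism for comparing an arbitrary $\Delta$-regular graph to the clique. In short: the approach you sketch is the natural one and may well be the right starting point, but it contains a genuine gap at the decisive step, so it does not constitute a proof of the conjecture.
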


The proof of the analogous result for the hard-core model (Theorem~\ref{thmOccMin}) is short and elementary.  For the anti-ferromagnetic Ising model the extremal problem seems much more challenging.

Going beyond the case of $\Delta$-regular graphs, there is a computational threshold for the problem of approximating the partition function of any anti-ferromagnetic $2$-spin system on the class of graphs of maximum degree $\Delta$~\cite{LLY13,SS14,GSV15}. 
Several complications arise in this setting, as the uniqueness region is more subtle to define (using the notion of `up-to-$\Delta$ uniqueness')~\cite{LLY13}, the Ising and hard-core models no longer capture all possible models, and the solution to the extremal problem can change as we have widened the class of graphs to bounded degree instead of $\Delta$-regular.  Nonetheless, the general framework introduced here provides a generic connection between computational thresholds for partition functions of spin systems and for coefficients of partition functions via extremal problems.

\begin{question}
What is the computational threshold for computing coefficients of the partition function of a general $2$-spin anti-ferromagnetic systems on graphs of maximum degree $\Delta$?
\end{question}

Finally, one can ask about the computational complexity of approximating coefficients of partition functions outside of the $2$-spin, antiferromagnetic setting.  In the case of the ferromagnetic Ising model, Carlson, Davies, Kolla, and Perkins~\cite{carlson2021computational} recently proved computational thresholds for approximating coefficients of the partition function; that is, for approximating the partition function of the Ising model at fixed magnetization.  In this case, there can be computational hardness despite the tractability of approximating the Ising model partition function.  The algorithmic results in~\cite{carlson2021computational} build on the techniques here: the parameter regime for which the algorithmic approach succeeds is determined by the solution to an extremal problem of maximizing the magnetization of the Ising model on graphs of maximum degree $\Delta$.  
 Considering anti-ferromagnetic $q$-spin models for $q>2$ would also be interesting, though in this case the complexity of approximating the partition function is not fully resolved.

\section{Algorithms}\label{secAlgorithms}

In this section, we fix $\Delta\ge3$ and $\alpha < \alpha_c(\Delta)$.  We first give an algorithm that, for a graph $G\in\cG_{\le\Delta}$ on $n$ vertices and an integer $k\le\alpha n$, returns an $\eps$-approximate uniform sample from $\cI_k(G)$ and runs in time polynomial in $n$ and $\log(1/\eps)$; this proves the sampling part of Theorem~\ref{thm:main}\ref{itm:algs}. 
We then use this algorithm to approximate $i_k(G)$ using a standard simulated annealing process to prove the approximate counting part of Theorem~\ref{thm:main}\ref{itm:algs}.

Given $\lam\ge0$, let $\bI$ be a random independent set from the hard-core model on $G$ at fugacity $\lam$.
We will write $\Pr_{G,\lam}$ for probabilities over the hard-core measure $\mu_{G,\lam}$, so e.g.\ $\Pr_{G,\lam}(|\bI|=k)$ is the probability that $\bI$ is of size exactly $k$. 
Often we will suppress the dependence on $G$.

A key tool that we use for probabilistic analysis and to approximately sample from $\mu_{G,\lam}$ is the \emph{Glauber dynamics}. 
This is a Markov chain with state space $\cI(G)$ and stationary distribution $\mu_{G,\lam}$. 
Though the algorithm of Weitz~\cite{Wei06} was the first to give an efficient approximate sampling algorithm for $\mu_{G,\lam}$ for $\lam < \lam_c(\Delta)$ and all $G \in \cG_{\le\Delta}$, a randomized algorithm with better running time now follows from recent results showing that the Glauber dynamics mix rapidly for this range of parameters~\cite{ALG20,CLV20,CLV20a}.  
The \emph{mixing time} $\Tmix(\cM,\eps)$ of a Markov chain $\cM$ is the number of steps from the worst-case initial state $I_0$ for the resulting state to have a distribution within total variation distance $\eps$ of the stationary distribution. 
We will use the following result of Chen, Liu, and Vigoda~\cite{CLV20a}, and the sampling algorithm that it implies.

\begin{theorem}[\cite{CLV20a}]\label{thmSampleLambda}
    Given $\Delta\ge 3$ and $\xi\in(0,\lam_c(\Delta))$, there exists $C>0$ such that the following holds. For all $0\le\lam < \lam_c(\Delta)-\xi$ and graphs $G\in\cG_{\le\Delta}$ on $n$ vertices, the mixing time $\Tmix(\cM,\eps)$ of the Glauber dynamics $\cM$ for the hard-core model on $G$ with fugacity $\lam$ is at most $Cn\log(n/\eps) $.  This implies an $\eps$-approximate sampling algorithm for $\mu_{G,\lam}$ for $G \in \cG_{\le\Delta}$ that runs in time $O(n \log n \log (n/\eps))$. 
\end{theorem}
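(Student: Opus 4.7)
The plan is to follow the spectral independence framework developed in \cite{ALG20} and refined in \cite{CLV20a}. There are three ingredients: (i) a definition of spectral independence and a proof that it holds at subcritical fugacities uniformly over $G\in\cG_{\le\Delta}$, (ii) a local-to-global argument upgrading spectral independence into a modified log-Sobolev inequality for Glauber dynamics, and (iii) an efficient data-structure implementation of a single Glauber step.

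First I would define, for a partial configuration $\tau$ on $S\subset V$, the influence matrix $\Psi^\tau$ indexed by $V\setminus S$ with $\Psi^\tau_{u,v}=\Pr_{G,\lam}[v\in\bI\mid u\in\bI,\tau]-\Pr_{G,\lam}[v\in\bI\mid u\notin\bI,\tau]$, and say $\mu_{G,\lam}$ is $\eta$-spectrally independent if the spectral radius of every $\Psi^\tau$ is at most $\eta$. I would bound the spectral radius by the maximum row sum $\sum_v |\Psi^\tau_{u,v}|$. To control this sum I would use Weitz's self-avoiding walk tree to reduce marginals on $G$ to marginals on a tree $T_u$ rooted at $u$ with external fields, and then apply the potential-function contraction argument of \cite{LLY13}: under the Kelly condition $\lam<\lam_c(\Delta)$, the tree recursion is contractive in the LLY potential by a factor bounded away from $1$ (quantitatively controlled by $\xi$), so correlations between $u$ and the depth-$d$ vertices of $T_u$ decay geometrically. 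Summing a geometric series and using the $\Delta$-ary branching of $T_u$ yields $\eta=\eta(\Delta,\xi)$ independent of $n$.

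Second, I would invoke the entropy-factorization (block factorization) machinery of \cite{CLV20a}. The key technical statement is that if every conditional distribution of $\mu_{G,\lam}$ is $\eta$-spectrally independent and marginals are bounded away from $0$ and $1$ by a constant depending on $\lam$ and $\Delta$ (which holds since vertex marginals lie in $[\lam/(1+\lam)^\Delta,\lam/(1+\lam)]$ after conditioning), then a local-to-global induction gives an $\Omega(1/n)$ modified log-Sobolev constant for the Glauber dynamics $\cM$ on $\cI(G)$. Standard entropy-decay arguments then yield $\Tmix(\cM,\eps)\le Cn\log(n/\eps)$ for a constant $C=C(\Delta,\xi)$.

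The main obstacles are the spectral-independence bound and the upgrade to a modified log-Sobolev inequality. The former requires the potential function of \cite{LLY13} to get quantitative contraction that survives the sum over tree levels and remains finite as $\lam\uparrow\lam_c(\Delta)$ (which is exactly where $\xi$ enters). The latter is the technical heart of \cite{CLV20a}: one must recursively factor entropy over blocks and verify that spectral independence plus marginal boundedness controls each factor, which requires care because a naive Poincar\'e-type argument would only yield polynomial, not quasilinear, mixing. Finally, for the stated runtime, I would implement each Glauber step using a hash table storing $\bI_t$ so that proposing a vertex, checking whether it has a neighbor in $\bI_t$, and performing the update cost $O(\log n)$ in expectation; composing with the $O(n\log(n/\eps))$ step bound gives the claimed $O(n\log n\log(n/\eps))$ sampling time.
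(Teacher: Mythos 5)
This theorem is imported verbatim from Chen--Liu--Vigoda~\cite{CLV20a}; the paper does not reprove it, and the only argument the paper itself supplies is the one-line remark that the sampling algorithm follows from the mixing-time bound with an extra $\log n$ factor per Glauber step, attributed to reading $O(\log n)$ random bits to choose a uniform vertex. Your proposal is therefore a reconstruction of the cited work rather than something to compare against an in-paper proof, and as such it captures the architecture of~\cite{ALG20,CLV20a} reasonably well: spectral independence bounded via Weitz's self-avoiding-walk tree together with the Li--Lu--Yin potential-method contraction, then the local-to-global entropy-factorization argument that upgrades spectral independence plus marginal boundedness to a modified log-Sobolev constant of order $1/n$, giving $O(n\log(n/\eps))$ mixing.

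Two small imprecisions worth flagging. First, your marginal-boundedness claim ``marginals lie in $[\lam/(1+\lam)^\Delta,\lam/(1+\lam)]$ after conditioning'' is not literally true: conditioning a neighbor of $v$ to be occupied forces $v$'s marginal to $0$. The boundedness condition in~\cite{CLV20a} is stated only for vertices not already determined by the pinning, and this distinction matters in the entropy-factorization induction. Second, for the runtime you invoke a hash table giving $O(\log n)$ expected cost per step; in a bounded-degree graph one can maintain the current independent set in a flat boolean array and check feasibility of a proposed flip in $O(\Delta)=O(1)$ time, so the genuine $\log n$ overhead is the random-bit cost of sampling a uniform vertex, which is the explanation the paper gives and is cleaner. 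Neither issue affects the correctness of the conclusion, but a careful write-up following~\cite{CLV20a} would need to state the boundedness hypothesis in the pinned form and would not need the hash-table construction.
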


The sampling algorithm follows from the mixing time bound; the extra factor $\log n$ is the cost of implementing one step of Glauber dynamics (which requires reading $O(\log n)$ random bits to sample a vertex uniformly). 
Note that the implicit constant in the running time depends on how close $\lam$ is to $\lam_c(\Delta)$, but in applications of this theorem we will have $\lam \le \lam_c(\Delta)-\xi$ for some fixed $\xi>0$, so that the implicit constant depends only on $\xi$, which in turn depends on $\alpha$.

\subsection{Approximate sampling}

The algorithm Sample-$k$ listed in Algorithm~\ref{algSampleK} uses Theorem~\ref{thmSampleLambda} and a binary search on values of $\lam$ to generate samples from $\cI_k(G)$. 
The algorithm requires access to distributions $\hat\mu_\lam$ which are meant to be approximate versions of the hard-core model on $G$ at fugacity $\lam$. 
For the algorithm to work, we require that there is some value of $\lam$ such that sampling from the hard-core model at fugacity $\lam$ is likely to yield a set of size exactly $k$, and we will establish that this holds for some $\lam$ in a set of size $O(n^2)$ and for a suitable definition of `likely'. 
Quantitatively, these results inform the length $C\log n$ of the for loop and the value of $N$.
An intuitive choice for $\lam$ would be the unique value that makes $k$ the expected size of an independent set from the hard-core model on $G$, and in fact it suffices to find a value sufficiently close to this.

\begin{algorithm}
\SetKwInOut{Input}{input}\SetKwInOut{Output}{output}
\DontPrintSemicolon
\LinesNumbered
\caption{Sample-$k$\label{algSampleK}}
\Input{$\alpha < \alpha_c$; $\eps>0$; $G \in \cG_{\le\Delta}$ of size $n$; integer $k \le \alpha n$}
\Output{$I \in \cI_k(G)$ with distribution within $\eps$ total variation distance of the uniform distribution of $\cI_k(G)$}
\BlankLine
Let $\lam_*=\frac{\alpha}{1-(\Delta+1)\alpha}$\;
For $t = 0, \dots, \lfloor 2\lam_* n^2 \rfloor$, let $\lam_t = t/(2n^2)$\;
Let $\Lambda_0 = \{\lam_t : t = 0, \dots,  \lfloor 2\lam_* n^2 \rfloor \}$ and $N=C'n^2\log\big(\frac{\log n}{\eps}\big)$\;
\For{$i=1, \dots , C\log n$,\label{lineC}}{
Let $\lam$ be a median of the set $\Lambda_{i-1}$\;
Take $N$ independent samples $I_1,\dotsc$, $I_N$ from a distribution $\hat \mu_{\lam}$ on $\cI(G)$\label{lineN}\;
Let $\kappa  = \frac{1}{N}  \sum_{j=1}^N |I_j|$\;
If $| \kappa - k | \le 1/4$ and there exists $j \in \{1, \dots , N\}$ so that $ |I_j| =k$, then output $I_j$ for the smallest such $j$ and \textbf{halt}\;
If $\kappa \le k$, let $\Lambda_i = \{ \lam' \in \Lambda_{i-1} : \lam' > \lam \}$.  If instead $\kappa > k$, let $\Lambda_i = \{ \lam' \in \Lambda_{i-1} : \lam' < \lam \}$\;
}
If no independent set of size $k$ has been obtained by the end of the for loop (or if $\Lambda_j = \emptyset$ at any step), use a greedy algorithm and output an arbitrary $I \in \cI_k(G)$\;
\end{algorithm}

\begin{theorem}\label{thmSampleK}
    Consider the algorithm Sample-$k$. 
    There exist absolute constants $C,C'>0$ such that the following hold for $N=C'n^2\log(\log(n)/\eps)$ as defined in the algorithm.
If the distributions $\hat \mu _{\lam}$ are each within total variation distance $\eps/(2CN\log n)$ of $\mu_{G,\lam}$, then the output distribution of Sample-$k$ is within total variation distance $\eps$ of the uniform distribution of $\cI_k(G)$.
The running time of Sample-$k$ is $O (N\log n \cdot T(n,\eps) )$ where $T(n,\eps)$ is the running time required to produce a sample from $\hat \mu_{\lam}$ satisfying the above guarantee. 
\end{theorem}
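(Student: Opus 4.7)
The plan is to separate the analysis into two parts: a coupling step that replaces the approximate samples from $\hat\mu_\lam$ with exact hard-core samples from $\mu_{G,\lam}$ at small TV cost, and then an analysis of the idealised algorithm that uses exact samples throughout.

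First I would handle the coupling. Sample-$k$ draws at most $CN\log n$ samples in total. Since each $\hat\mu_\lam$ is within $\eps/(2CN\log n)$ of $\mu_{G,\lam}$ in TV, subadditivity of TV under products together with a union bound across iterations yields a coupling of the actual execution with an idealised execution that agrees with probability at least $1-\eps/2$; the two output distributions therefore differ by at most $\eps/2$ in TV.

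Next I would analyse the idealised algorithm. The pivotal observation is that $\mu_{G,\lam}(\,\cdot\mid |\bI|=k)$ is exactly uniform on $\cI_k(G)$ for every $\lam>0$, since $\mu_{G,\lam}$ weights $I$ only through $\lam^{|I|}$. Consequently, if the idealised algorithm halts in the for loop with some $I_{j^*}$ of size $k$ (rather than invoking the greedy fallback), the output will be uniform on $\cI_k(G)$: the side constraint $|\kappa-k|\le 1/4$ depends only on $|I_{j^*}|$ (fixed at $k$ on this event) and on the sizes of the other samples, and so imposes no further constraint on the set $I_{j^*}$ itself. Thus the TV distance of the idealised output from uniform equals the probability that the greedy fallback is invoked, and the remaining task is to bound this by $\eps/2$. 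To do so, I will invoke Lemma~\ref{lemExistLambda} to produce some $\lam^*\in\Lam_0$ with $|\E_{G,\lam^*}|\bI|-k|\le 1/8$ and $\Pr_{G,\lam^*}[|\bI|=k]=\Omega(n^{-1}\log^{-1}n)$. Since $|\bI|\le n$, Hoeffding's inequality gives $|\kappa-\E_{G,\lam}|\bI||\le 1/8$ in any one iteration with probability at least $1-2\exp(-cN/n^2)$; for $N=C'n^2\log(\log n/\eps)$ with $C'$ large this is at least $1-\eps/(4C\log n)$, so a union bound over the $C\log n$ iterations makes every empirical mean simultaneously accurate except with probability $\eps/4$. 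On this good event, monotonicity of $\lam\mapsto\E_{G,\lam}|\bI|$ makes the binary search comparisons correct, so the search localises onto a grid value $\lam$ at which $|\kappa-k|\le 1/4$; I will argue that such $\lam$ inherits a lower bound on $\Pr_{G,\lam}[|\bI|=k]$ comparable to that at $\lam^*$. The probability that none of the $N$ samples at such $\lam$ attains size exactly $k$ is then $(1-\Omega(n^{-1}\log^{-1}n))^N\le\exp(-\Omega(C'n\log(\log n/\eps)/\log n))\le \eps/4$ for $C'$ large. Summing these contributions bounds the fallback probability by $\eps/2$ and the overall TV distance by $\eps$.

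The running time claim will be immediate: $C\log n$ iterations, each producing $N$ samples at cost $T(n,\eps)$ per sample plus $O(\log n)$ overhead for the binary search bookkeeping, give $O(N\log n\cdot T(n,\eps))$. The hard part will be reconciling the discrete grid $\Lam_0$, the noisy comparisons $\kappa$ versus $k$, and the existence guarantee of Lemma~\ref{lemExistLambda}, so as to ensure the binary search actually visits a $\lam$ enjoying the required lower bound on $\Pr_{G,\lam}[|\bI|=k]$; the coupling step and the uniformity of the conditional measure should be essentially automatic.
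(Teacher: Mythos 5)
Your proposal follows essentially the same route as the paper: a coupling step to replace $\hat\mu_\lam$ by $\mu_{G,\lam}$ at cost $\eps/2$, the observation that $\mu_{G,\lam}$ conditioned on $|\bI|=k$ is exactly uniform, Hoeffding plus a union bound to control $\kappa$ across iterations, the binary search argument using monotonicity of $\lam\mapsto n\alpha_G(\lam)$, and Lemma~\ref{lemExistLambda} for the lower bound on $\Pr_{G,\lam}(|\bI|=k)$. Two small points: Lemma~\ref{lemExistLambda} only yields a grid value with $|n\alpha_G-k|\le 1/2$ (not $1/8$ as you write), so your tolerances need to be reconciled with the algorithm's $|\kappa-k|\le 1/4$ halting criterion; and the step where you say you would argue the halting $\lam$ ``inherits'' the lower bound from $\lam^*$ is unnecessary, since the second conclusion of Lemma~\ref{lemExistLambda} already applies to \emph{every} grid point satisfying~\eqref{eqCloseLam}, not just one of them.
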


The sampling part of Theorem~\ref{thm:main} follows immediately from Theorem~\ref{thmSampleK} since by Theorem~\ref{thmSampleLambda} we can obtain $\eps/(2CN\log n)$-approximate samples from $\mu_{G,\lam}$ in time $O(n \log n \log(n\log n \cdot N/\eps))$. 
Thus, the total running time of Sample-$k$ with this guarantee on $\hat\mu_\lam$ is 
\[ O(N\cdot n\log^2 n\cdot \log(nN/\eps)) \le n^3\log^3n \cdot \mathrm{polylog}\big(\tfrac{\log n}{\eps}\big)\,. \]

Note that the use of Glauber dynamics in Sample-$k$ could be replaced by any other polynomial-time approximate sampler for $\mu_{G,\lam}$, such as the algorithm due to Weitz~\cite{Wei06} based on the method of correlation decay. 
We do however use rapid mixing of the Glauber dynamics in Lemma~\ref{lemExistLambda} below to prove correctness of our algorithm, by establishing a lower bound on the probability that a set sampled from $\mu_{G,\lam}$ has size close to its mean.  A resolution of our Conjecture~\ref{conjLCLT} could imply the necessary probabilistic statement without appealing to the Glauber dynamics, however.

Before we prove Theorem~\ref{thmSampleK}, we collect a number of preliminary results that we will use.  The first is a bound on the probability of getting an independent set of size close to the mean from the hard-core model when $\lam < \lam_c(\Delta)$. We use the notation $n\alpha_G(\lam)$ for the expected size of an independent set from the hard-core model on $G$ at fugacity $\lam$ to avoid ambiguities.

\begin{lemma}\label{lemExistLambda}
For $\Delta\ge 3$ and $\alpha < \alpha_c(\Delta)$, there is a unique $\lam_* < \lam_c(\Delta)$ so that $\alpha_{K_{\Delta+1}}(\lam_*) = \alpha$, and the following holds.
For any $G \in \cG_{\le\Delta}$ on $n$ vertices and any $1 \le k \le \alpha n$, there exists an integer $t \in \{ 0,1, \dots, \lfloor2\lam_*n^2 \rfloor \}$ so that 
\begin{equation}
\label{eqCloseLam}
\left | n\alpha_G(t/(2n^2)) - k  \right | \le 1/2\,.
\end{equation}
Moreover, if  $t$ satisfies~\eqref{eqCloseLam} then
\[ \mu_{G,t/(2n^2)} (\cI_k(G) ) = \Omega \left (  \frac{1}{n \log n } \right) \,.\]
\end{lemma}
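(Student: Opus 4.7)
The plan is to establish each claim of the lemma in turn, beginning with uniqueness of $\lam_*$, then the existence of a good grid point $t$, and finally the probability lower bound, which is the technical heart.

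Uniqueness of $\lam_*$ is immediate from monotonicity: $\alpha_{K_{\Delta+1}}(\lam) = \lam/(1+(\Delta+1)\lam)$ is strictly increasing and continuous on $[0,\infty)$, maps $0\mapsto 0$ and $\lam_c(\Delta)\mapsto \alpha_c(\Delta)$, so for each $\alpha<\alpha_c(\Delta)$ there is a unique $\lam_*\in(0,\lam_c(\Delta))$ with $\alpha_{K_{\Delta+1}}(\lam_*)=\alpha$. For existence of $t$, let $f(\lam) = n\alpha_G(\lam) = \E_{G,\lam}|\bI|$, which is continuous and strictly increasing with $f(0) = 0$. The extremal fact that $K_{\Delta+1}$ minimises the occupancy fraction on $\cG_{\le\Delta}$ (Theorem~\ref{thmOccMin}) gives $f(\lam_*) \ge n\alpha_{K_{\Delta+1}}(\lam_*) = n\alpha \ge k$. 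The key calculation is the derivative bound $f'(\lam) = \var_{G,\lam}(|\bI|)/\lam$: using $|\bI|\le n$ together with $\Pr[v\in\bI]\le \lam/(1+\lam)$, which follows from the elementary $Z_{G-v}(\lam) \ge Z_{G-N[v]}(\lam)$, one obtains $\var(|\bI|)\le n\,\E|\bI|\le n^2\lam/(1+\lam)$, so $f'(\lam)\le n^2/(1+\lam)\le n^2$. With grid spacing $1/(2n^2)$, $f$ changes by at most $1/2$ between consecutive grid points; since $f(\lam_0) = 0$ and the largest grid point $\lam_T$ satisfies $f(\lam_T)\ge f(\lam_*) - 1/2 \ge k - 1/2$, the choice $t^\star = \min\{t : f(\lam_t) \ge k-1/2\}$ yields $f(\lam_{t^\star}) \in [k-1/2, k)$, as required.

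For the probability bound $\mu_{G,\lam_{t^\star}}(\cI_k) = \Omega(1/(n\log n))$, I would exploit the rapid mixing of Glauber dynamics. Write $\pi = \mu_{G,\lam_{t^\star}}$ and take $T = \Tmix(\delta) = O(n\log n)$ from Theorem~\ref{thmSampleLambda} for a suitably small constant $\delta>0$. If Glauber is run from $\pi$ and $N_T$ counts visits to $\cI_k$ in the first $T$ steps, then $\E_\pi N_T = T\pi(\cI_k)$, so a constant lower bound on $\Pr_\pi[N_T\ge 1]$ immediately gives $\pi(\cI_k)=\Omega(1/T)$. Since a Glauber step changes $|\bI|$ by at most $1$, the event $\{|\bI_0|<k\text{ and }|\bI_T|>k\}$ forces the chain to visit $\cI_k$ by an intermediate-value argument. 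To lower-bound this event I would combine three ingredients: (i) a variance bound $\var_\pi(|\bI|) = O(n)$, obtained from the Peters--Regts zero-free region for $Z_G$ around $\lam_{t^\star}$ together with Cauchy's formula applied to $\log Z_G$; (ii) a central limit theorem for $|\bI|$ under $\pi$ (Michelen--Sahasrabudhe, or the older LPRS), which, in conjunction with the mean being within $1/2$ of $k$, gives a constant lower bound on each of $\Pr_\pi[|\bI|<k]$ and $\Pr_\pi[|\bI|>k]$ for $n$ large (the small-$n$ regime is handled directly by $i_k(G)\ge 1$); and (iii) the mixing inequality that $\Pr[|\bI_T|>k\mid \bI_0]$ is within $\delta$ of $\Pr_\pi[|\bI|>k]$ for every starting $\bI_0$. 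Multiplying these yields $\Pr_\pi[|\bI_0|<k,\,|\bI_T|>k] = \Omega(1)$ and hence $\pi(\cI_k) = \Omega(1/T) = \Omega(1/(n\log n))$.

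The main obstacle will be ingredient (ii): upgrading from the trivial Markov-style bound $\Pr_\pi[|\bI|<k],\,\Pr_\pi[|\bI|>k]\ge \Omega(1/n)$, which follows from just $|\E_\pi|\bI|-k|\le 1/2$ and $|\bI|\le n$, to a constant lower bound. A purely mean-and-variance Chebyshev argument cannot achieve this, as the distribution of $|\bI|$ could in principle be heavily skewed onto one side of $k$; the quantitative central limit theorem available from \cite{MS19a,MS19,LPRS16} for $\lam<\lam_c(\Delta)$ is the natural tool, and making its conclusion uniform over $G\in\cG_{\le\Delta}$ (with the implied $n_0$ absorbed into the constants by dealing with small $n$ by hand) is where the delicate work lies. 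The remaining pieces---the derivative bound, the intermediate-value crossing via unit-step Glauber moves, and the conversion $\E_\pi N_T = T\pi(\cI_k)$---are essentially routine.
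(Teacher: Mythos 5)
Your overall strategy mirrors the paper's: a grid of spacing $1/(2n^2)$ justified by bounding $\alpha'_G(\lambda)=\var(|\bI|)/(n\lambda) \le n$, and a probability lower bound obtained from the Michelen--Sahasrabudhe CLT (enabled by the Peters--Regts zero-free region) together with rapid Glauber mixing and the fact that a Glauber step changes $|\bI|$ by at most one.

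There is, however, a real gap in your ingredient (i). You assert a variance \emph{upper} bound $\var_\pi(|\bI|) = O(n)$, to be obtained from the zero-free region via Cauchy's formula. What the argument actually requires here is a variance \emph{lower} bound, for two distinct reasons. First, to invoke Theorem~\ref{thm:clt} you must verify $\sigma_n\delta_n/\log n\to\infty$; since the zero-free radius of the probability generating function around $1$ is $\delta_n = \delta/\lambda$ with $\delta$ the (constant) zero-free radius of $Z_G$ around $\lambda$, this condition reads $\sigma_n \gg \lambda\log n$, which an upper bound on $\var$ cannot supply. Second, your step (ii) upgrades $|\E_\pi|\bI|-k|\le 1/2$ to the statement that $k$ is within a bounded number of standard deviations of the mean, and thence to constant lower bounds on $\Pr_\pi[|\bI|<k]$ and $\Pr_\pi[|\bI|>k]$; this requires $\sigma_n = \Omega(1)$. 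If $\sigma_n$ were $o(1)$ the CLT would say nothing about the mass near $k$. An upper bound on $\var$ does not help with either point, and Cauchy's formula only bounds derivatives of $\log Z_G$ from above, so it cannot produce the needed lower bound.

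The paper's route (Lemma~\ref{lem:variance}) is a direct probabilistic argument: fix a maximum independent set $J$, observe that conditionally on $\bI\setminus J$ the set $\bI\cap J$ is a sum of independent Bernoulli variables of parameter $\lambda/(1+\lambda)$ over the vertices of $J$ not blocked by $\bI\setminus J$, and apply the law of total variance to get $\var(|\bI|)\ge \lambda M/(1+\lambda)^{2+\Delta}$ with $M\ge n/(\Delta+1)$. Combined with $\lambda=\Omega(1/n)$ (which follows from $k\ge 1$ and $\alpha_G(\lambda)\le\lambda/(1+\lambda)$), this gives $\sigma_n=\Omega(1)$ and $\sigma_n\delta_n=\Omega(\sqrt{n/\lambda})\gg\log n$, closing both gaps. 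Once (i) is replaced with this lower bound, the remainder of your argument goes through essentially as in the paper.
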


To prove this lemma we  need  several more results.  The first is an extremal bound on $\alpha_G(\lam)$ for $G \in \cG_{\le\Delta}$.  The statement of the theorem follows from  a stronger property proved by Cutler and Radcliffe  in~\cite{CR14}; see~\cite{DJPR18a} for discussion.
\begin{theorem}[\cite{CR14}]\label{thmOccMin}
For all $G \in \cG_{\le\Delta}$ and all $\lam \ge 0$,
\[ \alpha_G(\lam) \ge \alpha _{K_{\Delta+1}}(\lam) = \frac{\lam}{1+ \lam (\Delta+1)} \,.\]
\end{theorem}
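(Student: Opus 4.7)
The plan is to give a short, self-contained proof by the occupancy method, relating the event $v\in\bI$ to the event that $N[v]\cap\bI=\emptyset$ locally, and then summing over $v$ and using the trivial bound $|N[\bI]|\le(\Delta+1)|\bI|$.

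First, I would establish the key local identity: for every vertex $v$, conditioning on $\bI\cap(V\setminus N[v])$, let $S\subseteq N(v)$ be the set of neighbors of $v$ with no occupied vertex in $\bI\setminus N[v]$. The conditional distribution on $\bI\cap N[v]$ is the hard-core model on $G[\{v\}\cup S]$, in which independent sets either equal $\{v\}$ (weight $\lam$) or are a (possibly empty) independent set of $G[S]$ (weight $Z_{G[S]}(\lam)$). Thus
\[
\Pr_{G,\lam}[v\in\bI\mid \bI\cap(V\setminus N[v])] \;=\; \frac{\lam}{\lam+Z_{G[S]}(\lam)}, \qquad
\Pr_{G,\lam}[N[v]\cap\bI=\emptyset\mid \bI\cap(V\setminus N[v])] \;=\; \frac{1}{\lam+Z_{G[S]}(\lam)}.
\]
Taking expectations yields the unconditional identity
\[
\Pr_{G,\lam}[v\in\bI] \;=\; \lam\cdot\Pr_{G,\lam}[N[v]\cap\bI=\emptyset].
\]

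Next I would sum over $v\in V(G)$. The left-hand side sums to $\E_{G,\lam}|\bI|$. For the right-hand side, $\sum_v\1[N[v]\cap\bI=\emptyset]=n-|N[\bI]|$, and the elementary union bound gives $|N[\bI]|\le\sum_{u\in\bI}(\deg(u)+1)\le(\Delta+1)|\bI|$ for any $G\in\cG_{\le\Delta}$. Therefore
\[
\E_{G,\lam}|\bI| \;=\; \lam\bigl(n-\E_{G,\lam}|N[\bI]|\bigr) \;\ge\; \lam\bigl(n-(\Delta+1)\E_{G,\lam}|\bI|\bigr),
\]
and rearranging gives $\E_{G,\lam}|\bI|\ge \lam n/(1+(\Delta+1)\lam)$, i.e.\ $\alpha_G(\lam)\ge \lam/(1+(\Delta+1)\lam)=\alpha_{K_{\Delta+1}}(\lam)$, as desired.

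There is no real obstacle here: the argument is essentially a one-shot application of the occupancy method, and the only input beyond the definitions is the trivial bound $|N[\bI]|\le(\Delta+1)|\bI|$. The only step worth double-checking carefully is the local conditional computation, in particular that on the event $v\in\bI$ the configuration on $N[v]$ is forced to be $\{v\}$ (so its weight is exactly $\lam$, not a sum over larger sets), which follows since $v$ is adjacent to every vertex of $S$ in $G[\{v\}\cup S]$. The bound is tight when $G=K_{\Delta+1}$ because there $S=N(v)$ deterministically and $|N[\bI]|=(\Delta+1)|\bI|$ whenever $\bI\ne\emptyset$, so every inequality above becomes an equality.
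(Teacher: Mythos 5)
Your proof is correct. The local Markov identity $\Pr_{G,\lam}[v\in\bI]=\lam\,\Pr_{G,\lam}[N[v]\cap\bI=\emptyset]$ holds exactly as stated, since conditionally on $\bI\cap(V\setminus N[v])$ the set $\bI\cap N[v]$ follows the hard-core model on $G[\{v\}\cup S]$, in which the only independent set containing $v$ is $\{v\}$ itself (weight $\lam$); summing over $v$ and applying $|N[\bI]|\le(\Delta+1)|\bI|$ gives $\E_{G,\lam}|\bI|\ge\lam\bigl(n-(\Delta+1)\E_{G,\lam}|\bI|\bigr)$, and rearranging yields the bound. This is a genuinely different route from the paper's, which gives no proof of its own but cites Cutler and Radcliffe~\cite{CR14}, noting that the theorem ``follows from a stronger property'' proved there (an extremal, coefficient-level statement about independent-set counts in bounded-degree graphs; the derivation of the occupancy-fraction bound from it is discussed in~\cite{DJPR18a}). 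Your occupancy-method argument is shorter and entirely self-contained: it needs only the spatial Markov property of the hard-core model and the trivial bound on $|N[\bI]|$, and it also transparently identifies the tight case $K_{\Delta+1}$. What the Cutler--Radcliffe route buys is finer coefficientwise information beyond the expectation-level inequality, which this paper does not otherwise use; for the role Theorem~\ref{thmOccMin} plays here, your direct argument is a complete and arguably preferable replacement for the citation.
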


We next rely on a zero-free region for $Z_G(\lam)$ due to Peters and Regts~\cite{PR19}, so that we can apply the subsequent central limit theorem.
\begin{theorem}[\cite{PR19}]\label{thm:zerofree}
    Let $\Delta\ge 3$ and $\xi\in(0,\lam_c(\Delta))$. Then there exists $\delta>0$ such that for every $G\in\cG_{\le\Delta}$ the polynomial $Z_G$ has no roots in the complex plane that lie within distance $\delta$ of the real interval  $[0, \lam_c(\Delta)-\xi)$.
\end{theorem}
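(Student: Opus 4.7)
The plan is to reduce the zero-freeness of $Z_G(\lam)$ for $G \in \cG_{\le\Delta}$ to a statement about a complex dynamical system on rooted trees via Weitz's self-avoiding walk construction, and then to build an explicit complex neighborhood of the real interval $[0,\lam_c(\Delta)-\xi)$ that is invariant under the hard-core tree recursion.

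First, I would set up ratios and the tree recursion. For a vertex $v \in V(G)$, let $R_v(\lam) = Z_G^{v}(\lam)/Z_G^{\bar v}(\lam)$ be the ratio of the contributions to $Z_G$ from independent sets containing $v$ and from those not containing $v$. Weitz's reduction shows that $R_v(\lam)$ equals the corresponding ratio $\hat R_r(\lam)$ on the self-avoiding walk tree $T_{\mathrm{SAW}}(G,v)$ rooted at $v$ with appropriate boundary conditions; on any rooted tree with root $r$ and children $c_1,\dots,c_d$ with $d \le \Delta-1$ these ratios satisfy
\[
\hat R_r(\lam) \;=\; \frac{\lam}{\prod_{i=1}^d \bigl(1 + \hat R_{c_i}(\lam)\bigr)}.
\]
Because $Z_G$ can be written as a telescoping product over vertices of factors of the form $1 + R_v(\lam)$, controlling the orbit of this recursion so that each ratio stays bounded and away from $-1$ directly yields non-vanishing of $Z_G(\lam)$.

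Next, following the potential-method program of Sinclair--Srivastava--Thurley and Li--Lu--Yin on the real line, I would look for a bounded open set $U \subset \mathbb{C}$, and a metric induced by a suitable holomorphic potential $\varphi$, with the property that the map $F_\lam(z_1,\dots,z_d) = \lam / \prod_i (1+z_i)$ is a strict contraction from $U^d$ into $U$ for every $d \le \Delta-1$ and every $\lam$ in a $\delta$-neighborhood of $[0,\lam_c(\Delta)-\xi)$. On the real line such contraction is known to hold strictly for $\lam < \lam_c(\Delta)$ and to degenerate at $\lam_c(\Delta)$; since $\xi>0$ enforces a quantitative contraction gap on the real interval, a continuity and compactness argument should allow one to perturb the real certificate into the complex plane and extract a uniform $\delta = \delta(\Delta,\xi)>0$.

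The main obstacle will be constructing the invariant complex domain uniformly over all out-degrees $d \le \Delta-1$ and over all $\lam$ in the target neighborhood, while keeping the contraction bound dependent only on $\Delta$ and $\xi$. Two technical points require care: (i) arranging that iterated application of $F_\lam$ keeps ratios bounded away from $-1$ on arbitrarily deep trees, which typically requires an invariant set whose image misses a neighborhood of $-1$; and (ii) handling the boundary of $U$, where tangential rather than strict contraction can ruin an inductive argument, so a potential function with a strict Jacobian bound away from $1$ is needed. Once the complex invariant domain is in hand, applying the recursion bottom-up on $T_{\mathrm{SAW}}(G,v)$ gives uniform boundedness of all $\hat R_v(\lam)$, hence $Z_G(\lam) \neq 0$ on the $\delta$-neighborhood, with the same $\delta$ for every $G \in \cG_{\le\Delta}$, as required.
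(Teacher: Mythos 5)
This theorem is not proved in the paper; it is cited directly from Peters and Regts~\cite{PR19}. Your roadmap --- reduce to the self-avoiding-walk tree, study the ratio recursion $\hat R_r = \lam / \prod_i (1+\hat R_{c_i})$, and exhibit a complex forward-invariant domain that avoids $-1$ so that the telescoping product $Z_G = \prod_v (1+R_v)$ is nonzero --- is indeed the strategy of that cited work, and you have correctly flagged the genuine crux: constructing an invariant set uniformly over all out-degrees $d\le \Delta-1$ and over $\lam$ in a full neighborhood of $[0,\lam_c(\Delta)-\xi)$.

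Where I would push back is on the claim that "a continuity and compactness argument should allow one to perturb the real certificate into the complex plane." This is where the real content lies, and naive perturbation does not go through. The recursion is iterated to arbitrary depth, so one needs a set closed under \emph{all} compositions of the maps $F_\lam$, not a contraction estimate near a fixed point; a small complex thickening of a real interval on which the real recursion contracts is not automatically invariant under the complex recursion, and the potential functions used on the real line (Li--Lu--Yin, Sinclair--Srivastava--Thurley) do not extend holomorphically in a way that preserves their contraction properties. Peters and Regts do not obtain the complex region by perturbing a real certificate; they construct the invariant domain explicitly (a delicately shaped region, controlling the arguments of the complex ratios), and the proof that it is invariant is the bulk of their argument. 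So the sketch identifies the correct approach and the correct obstacle, but the proposed route over that obstacle --- continuity/compactness perturbation of a real contraction --- is not how the cited proof works and would not, as stated, close the gap.
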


The \emph{probability generating function} of a discrete random variable $\bX$ distributed on the non-negative integers is the polynomial in $z$ given by $f(z) = \sum_{j\ge 0}\Pr(\bX=j)z^j$, and the above result shows that at subcritical fugacity the probability generating function of $|\bI|$ has no zeros close to $1$ in $\mathbb{C}$. 
This allows us to use the following result of Michelen and Sahasrabudhe~\cite{MS19}.

\begin{theorem}[\cite{MS19}]\label{thm:clt}
For $n\ge 1$ let $\bX_n$ be a random variable taking values in $\{0,\dotsc,n\}$ with mean $\mu_n$, standard deviation $\sigma_n$, and probability generating function $f_n$. If $f_n$ has no roots within distance $\delta_n$ of $1$ in $\mathbb{C}$, and $\sigma_n\delta_n/\log n\to\infty$, then $(X_n-\mu_n)/\sigma_n$ converges in distribution to a standard Gaussian.
\end{theorem}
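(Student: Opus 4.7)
The plan is to apply Lévy's continuity theorem and show that the characteristic function
\[
\phi_n(t) \;=\; \mathbb{E}\bigl[e^{it(X_n - \mu_n)/\sigma_n}\bigr] \;=\; e^{-it\mu_n/\sigma_n}\, f_n\!\left(e^{it/\sigma_n}\right)
\]
converges pointwise to $e^{-t^2/2}$ for each fixed $t \in \mathbb{R}$. For fixed $t$, the point $e^{it/\sigma_n}$ lies within distance $|t|/\sigma_n + O(1/\sigma_n^2)$ of $1$ on the unit circle, and the hypothesis $\sigma_n\delta_n \to \infty$ forces this point to lie inside $D(1,\delta_n)$ eventually, where $f_n$ is nonzero. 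Hence a single-valued holomorphic branch of $\log f_n$ is available along a path from $1$ to $e^{it/\sigma_n}$, normalized so that $\log f_n(1) = 0$.

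Next, introduce the cumulant generating function $L_n(s) = \log f_n(e^s)$, which is holomorphic on a neighborhood of $0$. Its Taylor coefficients at $0$ are the cumulants of $X_n$: $L_n(0) = 0$, $L_n'(0) = \mu_n$, and $L_n''(0) = \sigma_n^2$. Substituting $s = it/\sigma_n$ gives
\[
\log \phi_n(t) \;=\; L_n(it/\sigma_n) - it\mu_n/\sigma_n \;=\; -\tfrac{t^2}{2} + R_n(t),
\]
where $R_n(t) = \sum_{k \ge 3} L_n^{(k)}(0)\,(it/\sigma_n)^k/k!$ collects the higher-order Taylor terms. The entire task is to show $R_n(t) = o(1)$ for each fixed $t$.

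To bound $R_n(t)$ I would fix a radius $r = c\delta_n$ (some absolute $c > 0$) on which $L_n$ is holomorphic: this is valid because $|e^s - 1| < \delta_n$ throughout $D(0, c\delta_n)$, so $e^s$ avoids the zero set of $f_n$. Writing the factorization $f_n(w) = \prod_{\zeta}\frac{w - \zeta}{1 - \zeta}$ over the roots $\zeta$ of $f_n$ yields
\[
L_n(s) \;=\; \sum_{\zeta}\log\!\Bigl(1 + \tfrac{e^s - 1}{1 - \zeta}\Bigr),
\]
and one then estimates each summand using $|1 - \zeta| \ge \delta_n$ and $|e^s - 1| \le 2|s|$. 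A careful version of this estimate gives $|L_n(s)| \le C \log n$ on $|s| \le c\delta_n/2$. Cauchy's estimates then yield $|L_n^{(k)}(0)/k!| \le C\log n/(c\delta_n)^k$, and summing the geometric tail gives
\[
|R_n(t)| \;=\; O\!\left(\frac{|t|^3 \log n}{(\sigma_n \delta_n)^3}\right),
\]
valid whenever $|t|/\sigma_n < c\delta_n/2$, which holds for fixed $t$ and large $n$. Under the hypothesis $\sigma_n\delta_n / \log n \to \infty$, this prefactor tends to $0$, completing the proof.

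The main obstacle is the sharp bound $|L_n(s)| \le C \log n$ on the disk of radius $\asymp \delta_n$. The naive estimate $|f_n(w)| \le (1+|w-1|)^n$ only yields $|L_n(s)| \le O(n\delta_n)$, which would demand a much stronger hypothesis such as $\sigma_n\delta_n \gg n^{1/3}$. Obtaining the logarithmic improvement is the substantive analytic content of Michelen and Sahasrabudhe's work: it requires going beyond the trivial modulus bound by combining a Jensen-type root count in and near $D(1,\delta_n)$ with a Borel--Carathéodory-style argument to upgrade a one-sided control of $\mathrm{Re}\,L_n$ to a two-sided control of $|L_n|$. Once this analytic input is in hand, the Lévy continuity argument proceeds as sketched.
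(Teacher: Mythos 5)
First, a point of comparison: the paper does not prove this theorem at all --- it is imported as a black box from Michelen and Sahasrabudhe \cite{MS19} (with the remark that the older result of Lebowitz, Pittel, Ruelle and Speer \cite{LPRS16} would also suffice). So your sketch is not competing with an in-paper argument; the only question is whether it stands on its own as a proof, and it does not.

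The reduction via L\'evy continuity and the cumulant generating function $L_n(s)=\log f_n(e^s)$ has the right shape, but your pivotal estimate $|L_n(s)|\le C\log n$ on $|s|\le c\delta_n/2$ is false, not merely unproven. Take $X_n\sim\mathrm{Bin}(n,1/2)$: all roots of $f_n(z)=((1+z)/2)^n$ sit at $-1$, so one may take $\delta_n$ close to $2$, yet $L_n(s)=n\log\tfrac{1+e^s}{2}$ has modulus of order $n|s|$ on any fixed disk around $0$. In general $L_n(r)\ge \mu_n r$ for real $r>0$ (Jensen's inequality), so the linear term $\mu_n s$ --- and typically also the quadratic term $\tfrac12\sigma_n^2 s^2$ --- already exceeds $\log n$ by a huge margin on a disk of radius $\asymp\delta_n$. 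What your Cauchy-estimate step actually needs is a bound of the form $|L_n(s)-\mu_n s-\tfrac12\sigma_n^2 s^2|\le C\log n$ on that disk, equivalently bounds on the cumulants $\kappa_k$, $k\ge3$, that beat the trivial per-root estimate $\sum_j|1-\zeta_j|^{-k}\le n\delta_n^{-k}$. The route you gesture at (trivial modulus bound upgraded by Borel--Carath\'eodory, or the Jensen-type count of $O(nR)$ zeros within distance $R$ of $1$) still yields bounds proportional to $n$ rather than $\log n$, because it never relates the zero distribution to $\sigma_n$; handling the cancellations in $\sigma_n^2=\sum_j \frac{-\zeta_j}{(1-\zeta_j)^2}$ is precisely the substantive content of \cite{MS19}, which you explicitly defer to. As written, the proposal is therefore an (incorrectly stated) reduction to the very result being cited rather than an independent proof; for the purposes of this paper, citing \cite{MS19}, as the authors do, is the correct treatment.
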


The final tools we need are simple bounds on the variance of the size of an independent set from the hard-core model. A sharper upper bound was subsequently given in~\cite{jain2021approximate}.

\begin{lemma}\label{lem:variance}
    Let $G$ be a graph on $n$ vertices and let $\bI$ be a random independent set drawn from the hard-core model on $G$ at fugacity $\lam$. Then, if  the maximum degree of $G$ is at most $\Delta$ and $M\ge n/(\Delta+1)$ is the size of the largest independent set in $G$, we have
    \[ \frac{\lam}{(1+\lam)^{2+\Delta}}M \le \var(|\bI|) \le n^2\frac{\lam}{1+\lam}\,. \]
\end{lemma}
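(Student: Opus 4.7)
The plan is to handle the two inequalities separately: the upper bound is an elementary consequence of marginal bounds, while the lower bound comes from the law of total variance applied after conditioning on the spins outside a maximum independent set.

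For the upper bound, I would start from the deterministic inequality $|\bI| \le n$, which gives $\var(|\bI|) \le \E[|\bI|^2] \le n\,\E[|\bI|]$. To bound $\E[|\bI|]$, fix a vertex $v$ and condition on the spins at $V\setminus\{v\}$: the conditional probability that $v \in \bI$ is either $0$ (if some neighbor is occupied) or $\lam/(1+\lam)$, so $\Pr(v \in \bI) \le \lam/(1+\lam)$. Summing over $v$ gives $\E[|\bI|] \le n\lam/(1+\lam)$ and hence the upper bound.

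For the lower bound, let $J$ be a maximum independent set in $G$, so $|J|=M$ and every $v \in J$ has $N(v) \subseteq V\setminus J$. Conditioning on $\sigma_{V\setminus J}$, the constraints on $\sigma_J$ decouple across $v\in J$ because $J$ spans no edge; each $v\in J$ is independently Bernoulli$(\lam/(1+\lam))$ on the event $A_v = \{\text{no neighbor of } v \text{ is occupied}\}$ and identically $0$ otherwise. By the law of total variance,
\[ \var(|\bI|) \;\ge\; \E\!\left[\var(|\bI| \mid \sigma_{V\setminus J})\right] \;=\; \sum_{v\in J} \frac{\lam}{(1+\lam)^2}\,\Pr(A_v). \]
Since the same conditioning argument gives $\Pr(v\in\bI) = \tfrac{\lam}{1+\lam}\Pr(A_v)$, it suffices to show $\Pr(v\in\bI) \ge \lam/(1+\lam)^{\Delta+1}$ for every $v$, which would yield $\Pr(A_v) \ge 1/(1+\lam)^\Delta$ and so $\var(|\bI|) \ge M\lam/(1+\lam)^{\Delta+2}$.

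The main obstacle is this single-site marginal lower bound, which I would obtain from the standard partition-function decomposition $Z_G = Z_{G-v} + \lam Z_{G-N[v]}$. This gives
\[ \Pr(v \in \bI) \;=\; \frac{\lam Z_{G-N[v]}}{Z_{G-v} + \lam Z_{G-N[v]}} \;=\; \frac{\lam}{\lam + Z_{G-v}/Z_{G-N[v]}}, \]
so I need $Z_{G-v}/Z_{G-N[v]} \le (1+\lam)^\Delta$. This I would prove via the map $\cI(G-v) \to \cI(G-N[v])$ sending $I \mapsto I\setminus N(v)$: for a fixed image $I'$, the fibre consists of sets $I'\cup S$ where $S \subseteq N(v)$ and $S$ is compatible with $I'$, whose total weight is at most $\sum_{S\subseteq N(v)} \lam^{|S|} = (1+\lam)^{|N(v)|} \le (1+\lam)^\Delta$. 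Plugging back yields $\Pr(v\in\bI) \ge \lam/(\lam + (1+\lam)^\Delta) \ge \lam/(1+\lam)^{\Delta+1}$, completing the lower bound.
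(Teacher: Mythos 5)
Your proof is correct and follows the same core strategy as the paper: both lower bounds rest on the law of total variance after conditioning on the spins outside a fixed maximum independent set $J$, which decouples $\sigma_J$ into independent Bernoullis on the uncovered vertices and reduces the problem to showing $\Pr(A_v) \ge (1+\lam)^{-\Delta}$. The elementary sub-steps differ slightly: for that last bound the paper conditions successively on each of the at most $\Delta$ neighbors of $v$ (each is unoccupied with conditional probability at least $1/(1+\lam)$ no matter what else is revealed), while you derive it from the partition-function ratio $Z_{G-v}/Z_{G-N[v]} \le (1+\lam)^{\Delta}$ via the map $I \mapsto I \setminus N(v)$ — both are valid and of comparable length, and yours actually gives the marginally sharper $\Pr(v\in\bI) \ge \lam/(\lam+(1+\lam)^\Delta)$ before relaxing. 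For the upper bound the paper uses Cauchy--Schwarz on $\cov(\bX_u,\bX_v)$ summed over all pairs, whereas you use $\var(|\bI|) \le \E[|\bI|^2] \le n\,\E[|\bI|]$; again both give $n^2\lam/(1+\lam)$ and yours is, if anything, a touch more direct. No gaps.
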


\begin{proof}
    For the upper bound note that $|\bI|$ is the sum of the indicator random variables $\bX_v$ that the vertex $v\in V(G)$ is in $\bI$. 
    Then because $\Pr(\bX_v=1)\le\lam/(1+\lam)$ for all $v$, from the Cauchy--Schwarz inequality in the form $\cov(\bX_u,\bX_v)^2\le\var(\bX_u)\var(\bX_v)$ we obtain
    \[ \var(|\bI|) = \sum_{u\in V(G)}\sum_{v\in V(G)}\cov(\bX_u,\bX_v) \le n^2\frac{\lam}{1+\lam}\,. \]

    For the lower bound, let $J$ be some fixed independent set in $G$ of maximum size $M$. 
    Now write $\bX=|\bI|$, and let $\bK=\bI\setminus J$. 
    By the law of total variance, 
    \[ \var(\bX) = \EE[\var(\bX|\bK)] + \var(\EE[\bX|\bK]) \ge  \EE[\var(\bX|\bK)]\,. \]
    But we have $\bX= |\bK| + |\bI \cap J|$, and conditioned on $\bK$ the set $\bI \cap J$ is distributed according to the hard-core model on $J\setminus N_G(\bK)$, the subset of $J$ uncovered by $\bK$. 
    Since $J$ is independent, this is a sum of at most $|J|$ independent, identically distributed Bernoulli random variables with probability $\lam/(1+\lam)$. 

    Now, writing $\bU = |J\setminus N_G(\bK)|$ for the number of variables in the sum we have 
    \[ \var(\bX) \ge \EE[\var(\bX|\bK)] = \frac{\lam}{(1+\lam)^2} \EE\bU\,. \] 
    A vertex $u\in J$ is uncovered by $\bK$ precisely when $N(u)\cap \bK=\emptyset$. 
    For each neighbor $v$ of $u$ and for any value of $\bK\setminus\{v\}$, $v$ is occupied with probability at most $\lam/(1+\lam)$, and hence unoccupied with probability at least $1/(1+\lam)$.
    Since $|N(u)|\le\Delta$, the probability that $u$ is uncovered by $\bK$ is at least $(1+\lam)^{-\Delta}$. 
    This means that $\EE\bU \ge |J|(1+\lam)^{-\Delta}$ and hence
    \[ \var(\bX) \ge \frac{\lam}{(1+\lam)^{2+\Delta}}M\,. \]
    The assertion $M\ge n/(\Delta+1)$ follows from the fact that any $n$-vertex graph of maximum degree $\Delta$ contains an independent set of size at least $n/(\Delta+1)$, which is easy to prove by analyzing a greedy algorithm.
\end{proof}

\noindent
Now we are ready to prove Lemma~\ref{lemExistLambda}.

\begin{proof}[Proof of Lemma~\ref{lemExistLambda}]
A standard calculation gives
    \[ \frac{\partial}{\partial\lam}\alpha_G(\lam) = \frac{1}{n}\frac{\partial}{\partial\lam}\frac{\lam Z_G'(\lam)}{Z_G(\lam)} = \frac{1}{n\lam}\var(|\bI|)\,, \]
  and  so Lemma~\ref{lem:variance} gives that $0 < \alpha'_G(\lam) \le n$ for all $\lam>0$.

 Next, let $\lam_* < \lam_c(\Delta)$ be the solution to the equation $\alpha_{K_{\Delta+1}}(\lam_*) = \alpha$.  
 This means 
 \[ \lam_* = \frac{\alpha}{1-(\Delta+1)\alpha}\,, \]
 as defined in Sample-$k$.
 The fact that $\lam_* < \lam_c(\Delta)$ follows from the fact that $\alpha < \alpha_c(\Delta) = \alpha_{K_{\Delta+1}}(\lam_c(\Delta))$, and that occupancy fractions are strictly increasing. Then using Theorem~\ref{thmOccMin} we have that 
  \begin{equation}
     \alpha_G(\lam_*) \ge \alpha_{K_{\Delta+1}}(\lam_*) =\alpha \, ,
     \end{equation}
  and so there exists $\lam \in (0, \lam_*]$ such that $n \alpha_G(\lam) = k$.  
  Using the upper bound on $ \alpha'_G(\lam)$, we see that  as $\lam$ increases over an interval of length $1/(2n^2)$, the function $n\alpha_G(\lam)$ can increase by at most $1/2$.
 Hence, there is at least one integer $t \in \{ 1, \dots , \lfloor 2\lam_* n^2 \rfloor \}$ such that $| n \alpha_G(t/(2n^2)) - k | \le 1/2$.

    The second statement of Lemma~\ref{lemExistLambda} follows from a central limit theorem for $|\bI|$ and rapid mixing of the Glauber dynamics. 
    There is a close connection between zeros of the probability generating function of $|\bI|$ and the zeros of the partition function itself. 
    The probability generating function of $|\bI|$ is
    \[ f(z) = \sum_{j\ge 0}\Pr_\lam(|\bI|=j)z^j = \sum_{j\ge 0}\frac{i_j(G)\lam^j z^j}{Z_G(\lam)} = \frac{Z_G(\lam z)}{Z_G(\lam)}\,. \]
    Then for $\lam$ such that $Z_G(\lam)\ne 0$, $z$ is a root of $f$ if and only if $z\lam$ is a root of $Z_G(\lam)$.
    By our assumptions on $t$, when $\lam=t/(2n^2)$ Theorem~\ref{thm:zerofree} gives the existence of $\delta>0$ such that for all $G\in\cG_{\le\Delta}$ there are no complex zeros of $f$ within distance $\delta/\lam$ of $1$. 
    This is because Theorem~\ref{thm:zerofree} means that $Z_G(z\lam)=0$ implies $|z\lam - \lam| \ge \delta$.
    The condition of Theorem~\ref{thm:clt} which states that $\sigma_n\delta_n /\log n\to\infty$ is met because $\lam < \lam_c(\Delta) \le 4$ and so
    \[ \sigma_n\delta_n \ge \sqrt{\frac{\lam}{(1+\lam)^{2+\Delta}}\frac{n}{\Delta+1}}\cdot\frac{\delta}{\lam} \ge \Omega\Big(\sqrt{n/\lam}\Big)>\omega(\log n)\,. \]

    Now, let $\lam=t/(2n^2)$ and suppose that~\eqref{eqCloseLam} holds, meaning that $k$ is within $1/2$ of $n \alpha_G(\lam)$. 
    The standard deviation of the size of a set drawn from $\mu_{G,\lam}$ is at least a constant, which follows from the lower bound in Lemma~\ref{lem:variance} of $\Omega(\sqrt{\lam n})$ and the fact that $\lam\ge \Omega(1/n)$. 
    To see this, note that $\alpha_G(\lam) \le \lam/(1+\lam)$ for any graph and non-negative fugacity $\lam$. This holds because when $\bI$ is a random independent set from the hard-core model, conditioned on a vertex $v$ having no neighbors in $\bI$, $v\in\bI$ with probability $\lam/(1+\lam)$. If $v$ has a neighbor in $\bI$ then $v\notin\bI$ with probability $1$, and the bound follows. 
    Then using~\eqref{eqCloseLam}, we have
    \[ n\frac{\lam}{1+\lam} \ge n\alpha_G(\lam) \ge k-1/2 \ge 1/2\,, \]
    and so $\lam \ge \Omega(1/n)$.
    We deduce that $k$ is within some constant number $r>0$ of standard deviations of the mean size $n\alpha_G(\lam)$.
    The central limit theorem and standard properties of the normal distribution mean that there are constants $\rho>0$ (small enough as a function of $r$) and $n_0$ such that for all $n\ge n_0$, with probability at least $\rho$, $|\bI|$ is at least $r$ standard deviations below the mean, and similarly with probability at least $\rho$ it is at least $r$ standard deviations above the mean. 
    So we have $\Pr_{G,\lam}(|\bI|\ge k)\ge\rho$ and $\Pr_{G,\lam}(|\bI|\le k)\ge\rho$.
    
    The transition probabilities when we are at state $I$ in the Glauber dynamics are given by the following random experiment.
    Choose a vertex $v\in V(G)$ uniformly at random and let
    \[ I' = \begin{cases}
        I\cup\{v\} & \text{with probability $\lam/(1+\lam)$}\,, \\
        I\setminus\{v\} & \text{with probability $1/(1+\lam)$}\,.
    \end{cases} \]
    Now if $I'$ is independent in $G$ move to state $I'$, otherwise stay in state $I$.
    This means that the sequence of sizes of set visited must take consecutive integer values.
    By Theorem~\ref{thmSampleLambda}, there is a constant $C''$ such that from an arbitrary starting state, in $C''n\log n$ steps the distribution $\pi$ of the current state is within total variation distance $\rho/2$ of the hard-core model. Then the following statements hold.
    \begin{enumerate}[label=\textup{(\roman*)}]
        \item Starting from an independent set of size at most $k$, with probability at least $\rho/2$ the state after $C''n\log n$ steps is an independent set of size at least $k$.
        \item Starting from an independent set of size at least $k$, with probability at least $\rho/2$ the state after $C''n\log n$ steps is an independent set of size at most $k$.
    \end{enumerate}
    Consider starting from an initial state distributed according to $\mu_{G,\lam}$. Then every subsequent state is also distributed according to $\mu_{G,\lam}$, and the above facts mean that for any sequence of $C''n\log n$ consecutive steps, with probability at least $\rho/2$ we see a state of size exactly $k$. 
    Recalling that $\lam=t/(2n^2)$, this immediately implies that
    \[ \mu_{G,t/(2n^2)}(\cI_k(G)) \ge \frac{\rho}{2C''n\log n}\,, \]
    as required. 
\end{proof}

\begin{proof}[Proof of  Theorem~\ref{thmSampleK}]
We first prove the theorem under the assumption that each $\hat\mu_\lam$ is exactly the hard-core measure $\mu_{G,\lam}$, taking note of how many times we sample from any $\hat\mu_\lam$.

We say a \emph{failure} occurs at step $i$ in the FOR loop if either of the following occur:
\begin{enumerate}[label=\textup{(\alph*)}]
\item  $| n\alpha_G(\lam) - \kappa| > 1/4$.
\item  $| n\alpha_G(\lam) - k| \le 1/2$ but the algorithm did not output an independent set of size $k$ in step $i$.
\end{enumerate}
We show that the probability that a failure occurs at any time during the algorithm is at most $\eps/2$.  By a union bound, it is enough to show that the probability of either type of failure at a given step $i$ is at most $\frac{\eps}{4C\log n}$.  

Consider an arbitrary step $i$ with its value of $\lam$. To bound the quantity $\Pr(|n\alpha_G(\lam) - \kappa| > 1/4)$, note that $\kappa$ is the mean of $N$ independent samples from $\hat\mu_\lam$, which we currently assume to be $\mu_{G,\lam}$. Then we have $\EE\kappa = n\alpha_G(\lam)$ and Hoeffding's inequality gives
$ \Pr(|n\alpha_G(\lam)-\kappa| > 1/4) \le 2e^{-N/(8n^2)}$, 
so for this to be at most $\eps/(4C\log n)$ we need only
$N\ge \Omega\big(n^2\log\big(\tfrac{\log n}{\eps}\big)\big)$.

To bound the probability that the current step involves $\lam$ such that $|n\alpha_G(\lam) - k| \le 1/2$, but we fail to get a set of size $k$ in the $N$ samples, observe that we have $N$ independent trials for getting a set of size $k$, and each trial succeeds with probability $p\ge c/(n\log n)$ by Lemma~\ref{lemExistLambda}. 
Then the probability we see no successful trials is $\big(1-\frac{c}{n\log n}\big)^N$,
which is at most $\eps/(4C\log n)$ for 
$N \ge \Omega\big(n\log n \cdot \log\big(\tfrac{\log n}{\eps}\big)\big)$.  
Thus, we can take $N = \Theta\big(n^2\log\big(\tfrac{\log n}{\eps}\big)\big)$, as in line~\ref{lineN} of Sample-$k$.

Next we show that in the event that no failure occurs during the running of the algorithm, the algorithm outputs an independent set $I$ with distribution within $\eps/2$ total variation distance of the uniform distribution on $\cI_k(G)$. 

We first observe that if no failure occurs, the algorithm at some point reaches a value of $\lam$ so that  $|n\alpha_G(\lam) - k| \le 1/2$. 
This is a simple consequence of Lemma~\ref{lemExistLambda}, which means there exists some $t$ with this property, and the binary search structure of the algorithm. 
In particular, in each iteration of the FOR loop, at line (e) the size of the set $\Lambda_i$ being searched goes down by (at least) half. 
Conditioned on no failures, the search also proceeds in the correct half of $\lam_i$ because 
we search the upper half only when $\kappa < k - 1/4$ and so conditioned on no failure we have 
$n\alpha_G(\lam) \le \kappa + 1/4 < k$ and hence using a larger value of $\lam$ must bring $n\alpha_G(\lam)$ closer to $k$. The case $\kappa > k + 1/4$ is similar.
This means that, conditioned on no failures, the algorithm must reach a value of $\lam$ such that $|n\alpha_G(\lam)-k|\le 1/4$.

Note that $\mu_{G,\lam}$ conditioned on getting a set of size exactly $k$ is precisely the uniform distribution on $\cI_k(G)$, hence if the algorithm outputs an independent set of size $k$ during the FOR loop, its distribution is exactly uniform distribution on $\cI_k(G)$.
Thus, under the assumption that each $\hat\mu_\lam$ is precisely $\mu_{G,\lam}$ we have shown that with probability at least $1-\eps/2$ no failures occur, and hence a perfectly uniform sample from $\cI_k(G)$ is output during the FOR loop. 

We do not need access to an efficient exact sampler for $\mu_{G,\lam}$, and can make do with the approximate sampler from Theorem~\ref{thmSampleLambda}. 
One interpretation of total variation distance is that when each $\hat\mu_\lam$ has total variation distance at most $\xi$ from $\mu_{G,\lam}$, there is a coupling between $\hat\mu_\lam$ and $\mu_{G,\lam}$ such that the probability they disagree is at most $\xi$. 
Then to prove Theorem~\ref{thmSampleK} we consider a third failure condition: that during any of the calls to a sampling algorithm for any $\hat\mu_\lam$ the output differs from what would have been given by $\mu_{G,\lam}$ under this coupling. 
Since we make at most $CN\log n$ calls to such sampling algorithms, provided $\xi \le \eps/(2CN\log n)$ the probability of any failure of this kind is at most $\eps/2$. 
Together with the above proof for samples distributed exactly according to $\mu_{G,\lam}$ which successfully returns uniform samples from $\cI_k(G)$ with probability $1-\eps/2$, we have now shown the existence of a sampler that with probability $1-\eps$ returns uniform samples from $\cI_k(G)$, and makes at most $CN\log n$ calls to a $\eps/(2CN\log n)$-approximate sampler for $\mu_{G,\lam}$ (at various values of $\lam$). 
Interpreting this in terms of total variation distance, this means we have an $\eps$-approximate sampler for the uniform distribution on $\cI_k(G)$ with running time $O(N\log n \cdot T(n,\eps))$.
\end{proof}

\subsection{Approximate counting}\label{sec:annealing}

Given a graph $G=(V,E)$ on $n$ vertices and $j\ge 0$, let $f_j(G) = (j+1)i_{j+1}(G)/i_j(G)$. This $f_j(G)$ has an interpretation as the \emph{expected free volume}  over a uniform random independent set $\bJ \in \cI_j(G)$, that is, $f_j = \EE|V\setminus(\bJ\cup N(\bJ))|$. This holds because each vertex in $V\setminus(\bJ\cup N(\bJ))$ can be added to $\bJ$ to make an independent set of size $j+1$, and each such set is counted $j+1$ times in this way. 
Then by a simple telescoping product we have
\begin{equation}\label{eqAnnealing}
     i_k(G) = \prod_{j=0}^{k-1}\frac{f_j(G)}{j+1}\,,
\end{equation}
and hence if for $0\le j\le k-1$ we can obtain a relative $\eps/k$ approximation to $f_j$ in time polynomial in $n$ and $1/\eps$ then we can obtain a relative $\eps$-approximation to $i_k(G)$ in time polynomial in $n$ and $1/\eps$.
By the definition of $f_j$ as an expectation over a uniform random independent set of size $j$, we can use an efficient sampling scheme for this distribution to approximate $f_j$, which is provided by Theorem~\ref{thmSampleK}. 
That is, by repeatedly sampling independent sets of size $j$ approximately uniformly and recording the free volume we can approximate the expected free volume $f_j(G)$, and hence the corresponding term of the product in~\eqref{eqAnnealing}. 
Doing this for all $0\le j\le k-1$ thus provides an approximation to $i_k(G)$. 
This scheme is an example of \emph{simulated annealing}, which can be used as a general technique for obtaining approximation algorithms from approximate sampling algorithms. For more details, see e.g.~\cite{JS97,SJ89}. 
Here the integer $j$ is playing the role of inverse temperature, and we approximate $i_k(G)$ by estimating $f_j(G)$ (by sampling from $\cI_j(G)$) with the cooling schedule $j=0,1,\dotsc,k-1$. 
We expect that a more sophisticated cooling schedule can be used to decrease the running time of our reduction, see for example~\cite{HK20}.

Since this annealing process is standard, we sketch a simple version of the method. Suppose that for all $0\le j \le k-1$ we have a randomized algorithm that with probability at least $1-\delta'$ returns a relative $\eps/k$-approximation $\hat t_j$ to $f_j(G)/(j+1)$ in time $T'$. 
Then~\eqref{eqAnnealing} implies that with probability at least $1-k\delta'$, the product $\hat\imath_k = \prod_{j=0}^{k-1}\hat t_j$ is a relative $\eps$-approximation to $i_k(G)$, and this takes time $kT'$ to compute. 
For the FPRAS in Theorem~\ref{thm:main}, it therefore suffices to design the hypothetical algorithm with $\delta'=1/(4k)$ and $T'$ polynomial in $n$ and $1/\eps$. 

First, suppose that we have access to an exactly uniform sampler for $\cI_j(G)$ for $0\le j\le k-1$, but impose the smaller failure probability bound of $\delta'/2$. 
Then, for each $j$, let $\hat t_j$ be the sample mean of $m$ computations of $|V\setminus(\bJ\cup N(\bJ))|/(j+1)$ where $\bJ$ is a uniform random independent set of size $j$.
We note that as a random variable $|V\setminus(\bJ\cup N(\bJ))|/(j+1)$ has a range of at most $j\Delta/(j+1)$ in a graph of maximum degree $\Delta$ because $0\le |N(\bJ)| \le j\Delta$, and for $j\le k-1$ and 
\[ k\le\alpha n < \alpha_c(\Delta) n < \frac{e}{1+e} \frac{n}{\Delta}\,,\] 
we have 
\[ \frac{|V\setminus(\bJ\cup N(\bJ))|}{j+1} \ge \frac{n-j(\Delta+1)}{j+1} \ge \frac{\Delta}{e}-1\,.\]
Let $S_j$ be the mean of $m$ samples of $|V\setminus(\bJ\cup N(\bJ))|/(j+1)$.
Then, using that for $\eps'\le 1$ it suffices to ensure $|S_j - \mu|\le \eps'\mu/2$ for $S_j$ to be a relative $\eps'$-approximation to $\mu$, by Hoeffding's inequality, 
\[ m \ge \Omega(\eps^{-2}k^2\log(1/\delta')) = \Omega(\eps^{-2}k^2\log k) \]
samples are sufficient to obtain the required approximation accuracy $\eps'$ with the required success probability $1-\delta'/2$. 
Since we do not have an exact sampler, we use the approximate sampler obtained in this section with total variation distance $\delta'/2$. 
Using the coupling between the exact and the approximate sampler that we used in the proof of Theorem~\ref{thmSampleK}, this suffices to obtain the required sampling accuracy with failure probability at most $\delta'$. 
Recalling that $k\le n$, it is now simple to check that the running time of the entire annealing scheme is polynomial in $n$ and $1/\eps$.
This completes the proof of Theorem~\ref{thm:main}\ref{itm:algs}.

\section{Hardness}\label{secHardness}

Let $\mathrm{IS}(\alpha,\Delta)$ be the problem of computing $i_{\lfloor \alpha n \rfloor}(G)$ for a $\Delta$-regular graph $G$ on $n$ vertices, and let $\mathrm{HC}(\lam,\Delta)$ be the problem of computing $Z_G(\lam)$ for a $\Delta$-regular graph $G$.  

To prove hardness we will use a notion similar to an `approximation-preserving reduction' from~\cite{dyer2004relative}.  
For a fixed error parameter $\eps>0$, we reduce $\mathrm{HC}(\lam,\Delta)$ to $\mathrm{IS}(\alpha,\Delta)$ by giving an algorithm that takes an instance $G$ of $\mathrm{HC}(\lam,\Delta)$ and computes a relative $\eps$-approximation to $Z_G(\lam)$ in time polynomial in $|G|$ by constructing some instance $G'$ of $\mathrm{IS}(\alpha,\Delta)$ and asking an oracle for an $\eps/2$-approximation to $i_{\lfloor \alpha|G'|\rfloor}(G')$. 
We ensure that $|G'|$ is polynomial in $|G|$, and if we could additionally bound $|G'|$ by a polynomial in $1/\eps$ we would have an approximation-preserving reduction as in~\cite{dyer2004relative}. 
Note that this gives a stronger result than the stated hardness in Theorem~\ref{thm:main}, as our reduction gives hardness for any fixed constant error $\eps$.

The main result of this section is the following.
\begin{theorem}\label{thmAPreduction}
For every $\eps>0$, $\Delta \ge 3$ and $\alpha \in (\alpha_c(\Delta), 1/2)$, there exists $\lam > \lam_c(\Delta)$ and a polynomial-time algorithm with the following properties. 
\begin{enumerate}[label=\textup{(\roman*)}]
		\item Given an instance $G$ of $\mathrm{HC}(\lam,\Delta)$, the algorithm constructs an instance $G'$ of $\mathrm{IS}(\alpha,\Delta)$ with size polynomial in $|G|$.
			\item Given a relative $\eps/2$-approximation of $i_{\lfloor\alpha |G'|\rfloor}(G')$, a relative $\eps$-approximation of $Z_G(\lam)$ can be computed in polynomial time.
\end{enumerate}
\end{theorem}

\noindent
Theorem~\ref{thmAPreduction} immediately implies the hardness part of Theorem~\ref{thm:main}, as we recall the hardness of obtaining a constant factor approximation for the problem $\mathrm{HC}(\lam,\Delta)$. 
\begin{theorem}[\cite{Sly10,galanis2016inapproximability,SS14}]\label{thmHChard}
The following holds for any fixed $\eps >0$ and $\lam>\lam_c(\Delta)$. 
Unless NP=RP, there is no polynomial-time algorithm that outputs a relative $\eps$-approximation of $Z_G(\lam)$ on the class of $\Delta$-regular graphs.
\end{theorem}

\begin{proof}[Proof of Theorem~\ref{thmAPreduction}]

Fix $\Delta \ge 3$, and let $\alpha \in (\alpha_c(\Delta) ,1/2 )$ be given.  We will construct a $\Delta$-regular graph $H$ on $n_H$ vertices such that for some value $\lam \in (\lam_c(\Delta),\infty)$ we have $\alpha_H(\lam) = \alpha$.
Our reduction is then as follows: given a $\Delta$-regular graph $G$ on $n$ vertices and $\eps>0$, let $G'$ be the disjoint union of $G$ with $rH$, the graph of $r$ disjoint copies of $H$, with $r= \lceil C\Delta n^2/\eps \rceil$ for some absolute constant $C$.  Let $N= | V(G')| = n+ r n_H$.  We will prove that
\begin{equation}
\label{eqReductionEquation}
e^{-\eps/2}  \frac{ i_{k } (G')}{  i_k (rH)   }   \le  Z_G(\lam) \le e^{\eps/2}  \frac{ i_{k } (G')}{   i_k (rH)   }   \,,
\end{equation}
where $k = \lfloor \alpha N \rfloor$.

Since $G'$ can be constructed and  $i_k (rH)$  computed in time polynomial in $n$, this provides the desired reduction.  
What remains is to construct the graph $H$ satisfying $\alpha_H(\lam) = \alpha$ and then to prove~\eqref{eqReductionEquation}. 

\subsection*{Constructing \texorpdfstring{$H$}{H}}

The graph $H = H_{a,b}$ will consist of the union of $a$ copies of the complete bipartite graph $K_{\Delta,\Delta}$ and $b$ copies of the complete graph $K_{\Delta+1}$.  Clearly $H$ is $\Delta$-regular.   
Since the occupancy fraction of any graph is a strictly increasing function of $\lam$, and the relevant occupancy fractions satisfy $\alpha_{K_{\Delta+1}}(\lam_c(\Delta)) = \alpha_c(\Delta)$ and $\lim_{\lam \to \infty} \alpha_{K_{\Delta,\Delta}}(\lam) =1/2$, we see that there exist integers $a,b \ge 0$ (with at least one positive) and $\lam > \lam_c(\Delta)$ so that $\alpha_{H_{a,b}}(\lam) =  \alpha$. 
A given pair $(a,b)$ provides a suitable $H_{a,b}$ when 
\[ \alpha_{H_{a,b}}(\lam_c(\Delta)) < \alpha < \lim_{\lam\to\infty}\alpha_{H_{a,b}}(\lam) = \frac{a\Delta +b}{2a\Delta + b(\Delta+1)}\,,\]
and hence it can be shown that for all $\Delta\ge 3$ one of the pairs $(0,1)$, $(1,16)$, $(1,6)$, $(1,3)$, $(2,3)$, $(2,1)$, $(1,0)$ suffices for $(a,b)$, and a suitable pair is easy to find efficiently.
This provides us with the desired graph $H$.  From here on, fix these values $a,b, \lam$ and let $n_H = 2 a \Delta + b (\Delta+1)$. 

\subsection*{Proving~\texorpdfstring{\eqref{eqReductionEquation}}{the desired approximation ratio}}

We now form $G'$ by taking the union of $G$ (a $\Delta$-regular graph on $n$ vertices) and $r$ copies of $H$.  Let $N = n + r n_H$ be the number of vertices of $G'$, and write $k = \lfloor \alpha N \rfloor$.  Let $rH$ be the graph consisting of the disjoint union of $r$ copies of $H$.  We can write:
\begin{align*}
i_k(G') &= \sum_{j=0}^n i_j(G)  i_{k-j} (rH) = i_k(rH)  \sum_{j=0}^n i_j(G)  \frac{i_{k-j} (rH) }{ i_k(rH)} \,.
\end{align*}
Now to prove~\eqref{eqReductionEquation} it suffices to show that for $r \ge C\Delta n^2 /\eps$ and $0 \le  j \le n$, we have 
\begin{equation}
e^{-\eps/2} \lam^j  \le  \frac{i_{k-j} (rH) }{ i_k(rH)}  \le e^{\eps/2} \lam^j  \,.
\end{equation}
That is, we are done if the ratios $i_{k-j}(rH)/i_k(rH)$ are well-approximated by $\lam^j$. 
We should expect this to hold because of a  local central limit theorem (CLT) for the hard-core model $\mu_{rH,\lambda}$ on $rH$, and the fact that $r\gg n$. 
Since $rH$ is a disjoint union of $r$ copies of $H$ we can use a standard local limit theorem for sums of i.i.d.\ random variables, see the upcoming Theorem~\ref{thm:gnedenko}. 
A local CLT for $\mu_{rH,\lam}$ is related to the ratios $i_{k-j}(rH)/i_k(rH)$ by a small calculation.
We have the exact formula (for any $0\le j\le k$)
\begin{align*}
i_{k-j} (rH) =  \frac{Z_{rH}(\lam)}{\lam^{k-j} } \Pr_{rH,\lam} ( |\mathbf I | = k-j)
\end{align*}
and so
\begin{align*}
  \frac{i_{k-j} (rH) }{ i_k(rH)} &= \lam^j \frac{  \Pr_{rH,\lam} ( |\mathbf I | = k-j)    }{ \Pr_{rH,\lam} ( |\mathbf I | = k)      } \,,
\end{align*}
where $\Pr_{rH,\lam}$ denotes probabilities with respect to an independent set $\mathbf I$ drawn according to the hard-core model on $rH$ at fugacity $\lam$. 
It is then enough to show 
\begin{align*}
e^{-\eps/2}  \le  \frac{  \Pr_{rH,\lam} ( |\mathbf I | = k-j)    }{ \Pr_{rH,\lam} ( |\mathbf I | = k)     }  \le e^{\eps/2} \,.
\end{align*}
This will follow from a local central limit theorem (e.g.~\cite{Gne48}) since $|\mathbf I|$ is the sum of $r$ i.i.d.\ random variables and the fact that $\EE_{rH,\lam}|\bI|$ is close to both $k$ and $k-j$. 
The following theorem gives us what we need.

\begin{theorem}[Gnedenko~\cite{Gne48}]\label{thm:gnedenko}
    Let $X_1, \dotsc, X_r$ be i.i.d.\ integer valued random variables with mean $\mu$ and variance $\sigma^2$, and suppose that the support of $X_1$ includes two consecutive integers. Let $S_r = X_1 + \dotsb + X_r$.  Then
    \begin{align*}
    \Pr(S_r =k) &=  \frac{1}{\sqrt{ 2 \pi r} \sigma} \exp \left[ -(k-r\mu)^2/(2r \sigma^2)   \right ] + o(r^{-1/2})\,,
    \end{align*}
    with the error term $o(r^{-1/2})$ uniform in $k$.
    \end{theorem}

This immediately implies that with $\mu$ and $\sigma^2$ the mean and standard deviation of the hard-core model on $H$ at fugacity $\lam$ and terms $o(1)\to 0$ as $r\to \infty$,
\begin{align*}
    \frac{  \Pr_{rH,\lam} ( |\mathbf I | = k-j)    }{ \Pr_{rH,\lam} ( |\mathbf I | = k) } = \frac{e^{-[j^2-2(k-r\mu)j]/(2r\sigma^2)}+o(1) \cdot e^{(k-r\mu)^2/(2r\sigma^2)}}{1+o(1)\cdot e^{(k-r\mu)^2/(2r\sigma^2)}}\,.
\end{align*}
It therefore suffices to show that for large enough $r$, namely $r\ge C\Delta n^2/\eps$, we can make $[j^2-2(k-r\mu)j]/(2r\sigma^2)$ smaller than $\eps/4$, show that $(k-r\mu)^2/(2r\sigma^2)$ is bounded above by some absolute constant, and take $r$ large enough in terms of the constant $\eps$ such that the entire right-hand side lies in $[e^{-\eps/2}, e^{\eps/2}]$.
Note that $\mu=\alpha n_H$, and by Lemma~\ref{lem:variance} we have for all $\Delta\ge 3$ (and any choices of $\alpha,\lam,a,b$ made according to our conditions),
\[ \sigma^2 \ge\frac{\lam}{(1+\lam)^{2+\Delta}}(a\Delta+b) \ge \frac{\lam_c(\Delta)}{(1+\lam_c(\Delta))^{2+\Delta}}(a\Delta+b)\ge \frac{0.00384}{\Delta}\,. \]
Since $k=\lfloor\alpha N\rfloor = \lfloor \alpha n + r\alpha n_H\rfloor$, we then have $(k-r\mu)^2 \le \alpha^2 n^2<n^2$, and hence $\frac{(k-r\mu)^2}{2r\sigma^2} \le C'\Delta \frac{n^2}{r}$, 
where $C'$ is an absolute constant. 
Now since $0\le j\le n$ we also have
\[ \left|\frac{j^2-2(k-r\mu)j}{2r\sigma^2}\right| \le C'\Delta \frac{n^2}{r}\,. \]
This means that provided we take $C$ to be a large enough absolute constant and $r \ge C \Delta n^2/\eps$, we have~\eqref{eqReductionEquation} as required.
\end{proof}

We note that the known inapproximability for $\mathrm{HC}(\lam,\Delta)$ is substantially stronger than the constant factor that we state in Theorem~\ref{thmHChard} (see~\cite{GSV15,SS14}), and in fact for fixed $\Delta$ and $\lam > \lam_c(\Delta)$ there exists a small enough constant $c$ such that there is no polynomial-time algorithm achieving a $e^{c n}$-relative approximation to $Z_G(\lam)$ on $n$-vertex graphs $G$ of maximum degree $\Delta$, unless NP=RP\@. It would be interesting to prove similarly strong hardness for $\mathrm{IS}(\alpha,\Delta)$ when $\alpha > \alpha_c(\Delta)$, but our approach does not seem to extend straightforwardly to such a result.
When we apply Theorem~\ref{thm:gnedenko} we do not have an explicit rate of convergence for the error term $o(r^{-1/2})$, and in the case that the desired error $\eps$ depends on $n$ (and hence $r$) this must be overcome. Moreover, the fact that $r \ge \Omega(n^2)$ in the proof of Theorem~\ref{thmAPreduction} is an obstacle in the case $\eps$ is exponential in $n$.

\section{Triangle-free graphs}\label{secTriangles}

In this section we briefly describe the modifications to the proofs in Sections~\ref{secAlgorithms} and~\ref{secHardness} to yield Theorem~\ref{thmTriangleFree}, an analogue of Theorem~\ref{thm:main} for triangle-free graphs.

\subsection{Algorithms}
We use the following lower bound on the occupancy fraction of triangle-free graphs, given in~\cite[Theorem 3]{DJPR17a}.

\begin{theorem}[\cite{DJPR17a}]\label{thmOccMinTriangleFree0}
For every graph $G$ of maximum degree $\Delta$ and every $\lam>0$,
\[ \alpha_G(\lam) \ge \frac{\lam}{1+\lam} \frac{ W (\Delta \log(1+\lam))  }{ \Delta \log(1+\lam)  } \,,\]
where $W (\cdot)$ is the Lambert $W$-function, satisfying $W(z) e^{W(z)}= z$. 
\end{theorem}

\noindent
From Theorem~\ref{thmOccMinTriangleFree0} we can obtain the following.
\begin{cor}[\cite{DJPR17a}]\label{thmOccMinTriangleFree}
For every $\delta>0$, there is $\Delta_0$ large enough so that for every $\Delta\ge \Delta_0$, and every triangle-free $G \in \cG_{\le\Delta}$,
\[ \alpha_G(\lam_c(\Delta)-1/\Delta^2) \ge \frac{1 - \delta}{\Delta} \,.\]
\end{cor}
\begin{proof}
Recall that $\lam_c(\Delta) = \frac{(\Delta-1)^{\Delta-1}}{(\Delta-2)^\Delta}$. Using Taylor expansion we have $ \lam_c(\Delta) = \frac{e}{\Delta} + O(\Delta^{-2})$.
From Theorem~\ref{thmOccMinTriangleFree0} we have (with asymptotics as $\Delta \to \infty$), 
\begin{align*}
\alpha_G(\lam_c(\Delta)-1/\Delta^2) &\ge \frac{\lam_c(\Delta)-1/\Delta^2}{1+\lam_c(\Delta)-1/\Delta^2} \frac{ W (\Delta \log(1+\lam_c(\Delta)-1/\Delta^2))  }{ \Delta \log(1+\lam_c(\Delta)-1/\Delta^2)  }  \\
&= (1+o(1)) \frac{e}{\Delta}   \frac{  W \left(  \Delta \left(  \frac{e}{\Delta} +O(\Delta^{-2}) \right)  \right)  }{ \Delta \left(  \frac{e}{\Delta} +O(\Delta^{-2}) \right)  } \\
&= (1+o(1)) \frac{1}{\Delta}   W(e+ o(1)) \\
&=  (1+o(1)) \frac{1}{\Delta} \,,
\end{align*}
and so for $\Delta $ large enough as a function of $\del$, we have $\alpha_G(\lam_c(\Delta)-1/\Delta^2) \ge \frac{1-\del}{\Delta}$.
\end{proof}

Now the algorithm for Theorem~\ref{thmTriangleFree} is essentially the same as for Theorem~\ref{thm:main}, but since we assume the graph $G$ is triangle free we can use a stronger lower bound on the occupancy fraction than Theorem~\ref{thmOccMin}. 
Let $\delta>0$ and  $\alpha <(1-\delta)/\Delta$ as in Theorem~\ref{thmTriangleFree}. Then Corollary~\ref{thmOccMinTriangleFree} means that for $\Delta\ge\Delta_0$ and any triangle-free graph $G\in\cG_{\le\Delta}$ we have 
\[ \alpha_G(\lam_c(\Delta)-1/\Delta^2) \ge \frac{1-\delta}{\Delta} > \alpha\,. \]
But occupancy fractions are continuous and strictly increasing, so with $\lam_* = \lam_c(\Delta) - 1/\Delta^2$ there exists $\lam\in(0,\lam_*]$ such that $k=n\alpha_G(\lam)$, as in the proof of Lemma~\ref{lemExistLambda} but permitting larger $\alpha$.
The analysis of the algorithm can then proceed exactly as in the proofs of Lemma~\ref{lemExistLambda} and Theorem~\ref{thm:main}, completing the proof of Theorem~\ref{thmTriangleFree}\ref{itmTriangleFreeAlgs}.

\subsection{Hardness}
The proof of hardness for triangle-free graphs is the same as that for general graphs, but we replace $K_{\Delta+1}$ with a (constant-sized) random regular graph in the construction. 
Bhatnagar, Sly, and Tetali~\cite{BST16} showed that the local distribution of the hard-core model on the random regular graph converges to that of the unique translation-invariant hard-core measure on the infinite regular tree for a range of $\lam$ including $\lam=\lam_c(\Delta)$. 
This means that if $K$ is a random $\Delta$-regular graph on $n$ vertices and $\alpha_{T_\Delta}$ denotes the occupancy fraction of the unique translation-invariant hard-core measure on the infinite $\Delta$-regular tree (see~\cite{BST16,DJPR17a}) we have with probability $1-o_n(1)$,
\[ \alpha_K(\lam_c(\Delta)) = \alpha_{T_\Delta}(\lam_c(\Delta)) + o_n(1) = \frac{1+o_{n,\Delta}(1)}{\Delta}\,, \]
where $o_n(1)\to 0$ as $n\to \infty$ and $o_{n,\Delta}(1)\to 0 $ as both $n$ and $\Delta$ tend to infinity.
Thus, for fixed $\delta\in(0,1)$, there is $n_0 = n_0(\delta)$ and $\Delta_0=\Delta_0(\delta)$ such that with probability at least $1-\delta$ a random $\Delta$-regular graph $K$ on $n_0$ vertices has $\alpha_K(\lam_c(\Delta)) \le (1+\delta)/\Delta$.
This means that in time bounded by a function of $\delta$ an exhaustive search over $\Delta$-regular graphs on $n_0$ vertices must yield a graph $K$ with the property $\alpha_K(\lam_c(\Delta)) \le (1+\delta)/\Delta$.
Now we replace $K_{\Delta+1}$ with the random $\Delta$-regular graph $K$ in the proof above, which for $\Delta\ge\Delta_0$ allows us to work with any $\alpha \in ((1+\delta)/\Delta,1/2)$ by the above argument. 
To finish the proof, we require that approximating $Z_G(\lam)$ is hard for $\Delta$-regular \emph{triangle-free} graphs $G$ when $\lam>\lam_c$. This follows directly from the proof of Sly and Sun~\cite{SS14}, as their gadget which shows hardness for $\Delta$-regular graphs contains no triangles. 
Thus, we have the following analogue of Theorem~\ref{thmAPreduction}, reducing the problem $\mathrm{HC}'(\lam,\Delta)$ of computing $Z_G(\lam)$ for a $\Delta$-regular triangle-free graph $G$ to the problem $\mathrm{IS}'(\alpha,\Delta)$ of computing $i_{\lfloor\alpha n\rfloor}(G)$ for a $\Delta$-regular triangle-free graph $G$.

\begin{theorem}
    Given $\delta>0$ there exists $\Delta_0$ such that the following holds for all $\Delta \ge\Delta_0$. For every $\alpha \in ((1+\delta)/\Delta, 1/2)$, there exists $\lam > \lam_c(\Delta)$ and an algorithm such that
    \begin{enumerate}[label=\textup{(\roman*)}]
		\item Given an instance $G'$ of $\mathrm{HC'}(\lam,\Delta)$, the algorithm constructs an instance $G'$ of $\mathrm{IS'}(\alpha,\Delta)$ with size polynomial in $|G|$.
			\item Given a relative $\eps/2$-approximation of $i_{\lfloor\alpha |G'|\rfloor}(G')$, a relative $\eps$-approximation of $Z_G(\lam)$ can be computed in polynomial time.
\end{enumerate}
\end{theorem}

\noindent
Along with Theorem~\ref{thmHChard}, this implies Theorem~\ref{thmTriangleFree}\ref{itmTriangleFreeHardness}.

\section{Anti-ferromagnetic Ising model}\label{secAFIsing}

\subsection{Proof of Theorem~\ref{thmanti-ferro}}

The proof of Theorem~\ref{thmanti-ferro} is essentially the same as the proof of Theorem~\ref{thm:main}, so we describe only the different ingredients needed. 
Our proof of Theorem~\ref{thm:main} relies on a uniqueness/computational threshold $\lam_c$ such that the following hold: 
\begin{enumerate}[label=\textup{(\roman*)}]
    \item\label{itmUniqueness} For $0 < \lam < \lam_c$ we have rapid mixing of Glauber dynamics of the model on every $G\in \cG$ (Theorem~\ref{thmSampleLambda}) and a complex zero-free region $U$ containing the real interval $(0, \lam_c)$ such that for all $G\in \cG$ the partition function $Z_G(\lam)$ is nonzero in $U$ (Theorem~\ref{thm:zerofree}).
    \item\label{itmHardness} For $\lam > \lam_c$ the problem of approximating $Z_G(\lam)$ is hard on $\cG$ (Theorem~\ref{thmHChard}).
\end{enumerate}

Given~\ref{itmUniqueness} and~\ref{itmHardness}, the computational threshold for computing coefficients of the partition function is given by minimizing the occupancy fraction at $\lam_c$ over the class of graphs $\cG$. 

For the case of the anti-ferromagnetic Ising model (Theorem~\ref{thmanti-ferro}), property \ref{itmUniqueness} is provided by the following recent results.

\begin{theorem}[\cite{CLV20,CLV20a}]
\label{thmIsingMix}
Fix $\Delta \ge 3$, $B \in (0, B_c(\Delta))$ and $\lam< \lam_c(\Delta, B)$.  Then the Glauber dynamics for the Ising model with parameters $B, \lam$ mix in time $O(n \log n)$ on the class of graphs of maximum degree $\Delta$. 
\end{theorem}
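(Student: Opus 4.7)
The plan is to follow the spectral independence framework of Anari--Liu--Oveis Gharan, as sharpened in the cited work of Chen--Liu--Vigoda, specialized to the anti-ferromagnetic Ising model. The overall strategy has three stages: (i) establish spectral independence of the Ising Gibbs measure at every fugacity strictly inside the tree uniqueness regime; (ii) verify a marginal boundedness condition so the ``entropic independence / local-to-global'' machinery can be invoked; and (iii) convert these analytic inputs into the sharp $O(n\log n)$ bound on $\Tmix$ for the Glauber dynamics on any $G\in\cG_{\le\Delta}$.

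For stage (i), I would work with the signed influence matrix $\Psi^\sigma$ whose $(u,v)$ entry equals $\mu_{G,B,\lam}(\sigma_v=+\mid \sigma_u=+,\tau)-\mu_{G,B,\lam}(\sigma_v=+\mid \sigma_u=-,\tau)$ under an arbitrary pinning $\tau$ of some subset of vertices, and show $\|\Psi^\sigma\|_\infty=O(1)$ uniformly in $n$, $\sigma$, and the pinning. The standard route is Weitz's self-avoiding walk tree: it reduces computing marginals on an arbitrary bounded-degree graph to computing ratios of marginals on a tree of branching at most $\Delta-1$, and the Ising tree recursion in log-ratio coordinates is a $(\Delta-1)$-fold sum of the univariate map $\varphi_{B,\lam}(x)=\log\frac{1+B e^{x}\lam^?}{B+e^{x}\lam^?}$ (with the appropriate parametrization). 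Strict tree uniqueness $\lam<\lam_c(\Delta,B)$ is equivalent to a strict contraction of this recursion under a suitable potential; composing the contraction along the SAW tree yields exponential decay of $|\Psi^\sigma_{u,v}|$ in the graph distance, which sums to $\|\Psi^\sigma\|_\infty\le K(\Delta,B,\lam)$. The line integral identity of Chen--Liu--Vigoda then upgrades this total influence bound to a uniform spectral independence bound for the down-up walk on the homogeneous simplicial complex generated by $\mu_{G,B,\lam}$.

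For stage (ii), I would verify that in the anti-ferromagnetic Ising model with $B\in(0,B_c)$ and $\lam\in(0,\lam_c(\Delta,B))$ there exist constants $b_-,b_+>0$, depending only on $\Delta,B,\lam$, so that for every vertex $v$ and every feasible pinning $\tau$, both conditional marginals $\mu(\sigma_v=\pm\mid\tau)$ lie in $[b_-,1-b_-]$. For the Ising model this is immediate from the definition: each local update factor is bounded between $\min(B,1)\cdot\min(\lam,1)$ and $\max(B^{-1},1)\cdot\max(\lam^{-1},1)$ after normalization, so the single-site conditional probabilities are uniformly bounded away from $0$ and $1$. This ``$b$-marginal boundedness'' is the hypothesis needed to promote spectral independence to modified log-Sobolev / entropy-factorization statements.

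For stage (iii), combine the uniform spectral independence bound from (i) and the marginal lower bound from (ii) via the Chen--Liu--Vigoda theorem (the local-to-global / entropy-tensorization argument): these two inputs imply a constant modified log-Sobolev inequality for the Glauber dynamics on any pinning, and a standard comparison then yields $\Tmix(\cM,\eps)\le C(\Delta,B,\lam)\cdot n\log(n/\eps)$, which gives the $O(n\log n)$ bound stated in the theorem. The main obstacle is stage (i): finding the correct potential function so that the one-dimensional tree recursion is strictly contractive uniformly up to $\lam_c(\Delta,B)$, matching the uniqueness threshold. The classical potential used for the hard-core model (due to Restrepo et al.\ and Li--Lu--Yin) must be replaced by an Ising-specific one whose derivative is tuned to the bilinear form coming from the anti-ferromagnetic edge weights $B$; the ensuing inequality is sharp precisely at $\lam=\lam_c(\Delta,B)$, which is what makes the argument technically delicate and why this step is where all of the model-specific work lives.
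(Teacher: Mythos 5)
This theorem is stated in the paper as a citation to Chen--Liu--Vigoda (\cite{CLV20,CLV20a}) and is not proved in the paper itself. Your three-stage outline (spectral independence via a potential-based contraction on Weitz's SAW tree, marginal boundedness, and the local-to-global/entropy-factorization upgrade to a constant modified log-Sobolev inequality and hence $O(n\log n)$ mixing) accurately reproduces the architecture of the cited proofs, so it is essentially the same approach as the source the paper relies on. The only minor inaccuracy is in your sketch of the univariate tree recursion $\varphi_{B,\lam}$: the external field $\lam$ enters the log-ratio recursion as an additive constant $\log\lam$ at each vertex rather than inside the per-edge map, which should be $\varphi_B(x)=\log\frac{B e^{x}+1}{e^{x}+B}$; this does not affect the overall structure of the argument.
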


\begin{theorem}[\cite{SST14,LLY13,SS20b}]
\label{thmIsingZeros}
Fix $\Delta \ge 3$, $B \in (0, B_c(\Delta))$ and $\lam_0< \lam_c(\Delta, B)$.  There exists $\delta>0$ and a simply connected region $U$ in the complex plane containing all points within a distance $\delta$ of the real interval $[0,\lam_0]$ so that $Z_G(\beta, \lam) \ne 0$ for $\lam \in U$ and any graph $G$ of maximum degree $\Delta$. 
\end{theorem}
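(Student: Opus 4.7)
The plan is to extend the complex-contraction approach of Peters and Regts for the hard-core model to the anti-ferromagnetic Ising setting, combining Weitz's self-avoiding walk tree reduction with a contraction analysis of the tree recursion in a complex domain. First I would reduce from bounded-degree graphs to trees. For a vertex $v$ of $G$, set
\[
R_v(G;B,\lam) \;:=\; \frac{Z_G^+(v;B,\lam)}{Z_G^-(v;B,\lam)},
\]
where the superscripts denote the spin pinned at $v$. Peeling vertices off one at a time shows that if $R_u \neq -1$ along every intermediate induced subgraph, then $Z_G(B,\lam) \neq 0$. Weitz's construction then produces, for each $v$, a self-avoiding walk tree $T_{\mathrm{SAW}}(G,v)$ of maximum branching $\Delta-1$ with suitable boundary conditions at the leaves, for which $R_v(G;B,\lam) = R_v(T_{\mathrm{SAW}};B,\lam)$. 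This identity extends to complex $\lam$ wherever denominators along the recursion do not vanish.

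Next I would analyze the tree recursion. If $v$ has children $v_1,\dots,v_d$ with $d \le \Delta-1$, the ratios satisfy
\[
R_v \;=\; \lam \prod_{i=1}^{d} \frac{B\,R_{v_i} + 1}{R_{v_i} + B}.
\]
On the real line, the Kelly/Li--Lu--Yin-style uniqueness arguments produce, for $\lam \le \lam_c(\Delta,B)$ and $B < B_c(\Delta)$, a bounded real interval $I$ forward-invariant under this map, together with a holomorphic potential $\phi$ such that the recursion is strictly contracting in the conformal metric $|\phi'(R)|\,|dR|$. I would thicken $I$ to a simply connected complex domain $D \subset \mathbb{C}$ bounded away from both the pole $R=-B$ and the value $R=-1$, and extend $\phi$ to $D$. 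The key technical step is to show that there is a complex neighborhood $U$ of $[0,\lam_0]$ such that whenever $\lam \in U$ and $R_1,\dots,R_d \in D$, the image $\lam \prod_i (BR_i+1)/(R_i+B)$ lies in $D$ and the recursion has operator norm strictly below $1$ in the metric $\phi$. Both conditions are open in $(\lam, R_1,\dots,R_d)$, so it suffices to verify them on the real line at $\lam = \lam_0$ and extend by continuity and compactness. Iterating the contraction down $T_{\mathrm{SAW}}(G,v)$ then confines $R_v(G;B,\lam)$ to $D$ for every $v$ and every $\lam \in U$.

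Finally I would conclude: by construction $-1 \notin D$, so $R_u \neq -1$ for every vertex $u$ of every induced subgraph encountered during the peeling argument, and hence $Z_G(B,\lam) \neq 0$ throughout $U$. A small topological cleanup (replacing $U$ with the connected component of its interior containing $[0,\lam_0]$ and filling in any bounded holes) yields the simply connected zero-free region in the statement.

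The main obstacle is the complex-contraction step: exhibiting a single domain $D$ and potential $\phi$ that work for every degree $d \le \Delta-1$ with strict contraction right up to the tree-uniqueness threshold $\lam_c(\Delta,B)$. For the hard-core model this was the heart of Peters--Regts; for the anti-ferromagnetic Ising model the corresponding construction (due to Shao--Sun, building on potential-function ideas of Li--Lu--Yin and Sinclair--Srivastava--Thurley) requires a more delicate two-parameter analysis of the M\"obius factor $R \mapsto (BR+1)/(R+B)$ in order to obtain contraction that is tight at the threshold.
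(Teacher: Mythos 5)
The paper does not give its own proof of this theorem: it is cited directly from the literature (Sinclair--Srivastava--Thurley, Li--Lu--Yin, and Shao--Sun), and your sketch accurately reconstructs the argument found there, in particular the Shao--Sun contraction-via-potential approach that extends Peters--Regts to general antiferromagnetic $2$-spin systems via the SAW-tree recursion $R_v = \lam \prod_i (BR_{v_i}+1)/(R_{v_i}+B)$ and is tight at the tree-uniqueness threshold, as you yourself note. One small correction to your topological cleanup at the end: you should not ``fill in bounded holes'' of the zero-free open set, since points in those holes were never shown to be zero-free; instead pass to a subset, e.g.\ a sufficiently thin open $\eps$-tube around the compact segment $[0,\lam_0]$, which is simply connected and, by compactness of the segment, contained in the open zero-free set you constructed.
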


\noindent
Property~\ref{itmHardness} is given by the following.

\begin{theorem}[\cite{galanis2016inapproximability,SS14}]
\label{thmIsinghardness}
Fix $\Delta \ge 3$, $B \in (0, B_c(\Delta))$, $\lam \in (\lam_c(\Delta, B),1]$, and $\eps >0$.  
Unless NP=RP, there is no polynomial-time algorithm that outputs a relative $\eps$-approximation of $Z_G(B,\lam)$ on the class of $\Delta$-regular graphs. \end{theorem}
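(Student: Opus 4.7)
The plan is to establish hardness via an approximation-preserving reduction from a known NP-hard problem, adapting the gadget construction pioneered by Sly~\cite{Sly10} for the hard-core model and extended by Sly--Sun and Galanis--\v{S}tefankovi\v{c}--Vigoda. A natural target for the reduction is approximately counting maximum cuts on bipartite $\Delta$-regular graphs, whose inapproximability under NP $\ne$ RP is classical.

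The heart of the construction is a \emph{random bipartite $\Delta$-regular gadget} $\Psi$ on $2m$ vertices, partitioned as $L \cup R$, together with a small set of designated \emph{port} vertices on each side. The essential property to prove is \emph{phase coexistence} at $\lam > \lam_c(\Delta,B)$: with high probability over the choice of $\Psi$, the anti-ferromagnetic Ising measure $\mu_{\Psi,B,\lam}$ concentrates on two distinct phases $\Phi_+$ and $\Phi_-$, corresponding to the two fixed points of the tree recursion that witness non-uniqueness on the infinite $\Delta$-regular tree. In $\Phi_+$ the $+$ spins predominantly occupy $L$ and $-$ spins occupy $R$, and vice versa in $\Phi_-$; configurations mixing the two phases contribute a negligible share of $Z_\Psi(B,\lam)$. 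One must also show that conditioning the port spins to values typical of $\Phi_+$ leaves the rest of the gadget in $\Phi_+$ (and analogously for $\Phi_-$), so that phases can be communicated across gadget boundaries.

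Given such a gadget, the reduction takes an instance $H$ of the hard problem, replaces each vertex of $H$ with a copy of $\Psi$, and wires the ports of adjacent copies according to the edges of $H$ so that the resulting graph $H'$ is $\Delta$-regular. The partition function $Z_{H'}(B,\lam)$ then factorises, up to sub-exponential multiplicative error, as a sum over $\{\Phi_+,\Phi_-\}$-assignments to the gadgets, weighted by a factor that depends only on the cut of $H$ induced by the $+/-$ labelling. Because $B < B_c(\Delta)$ makes the two-phase weighting asymmetric in a controllable way, an FPRAS for $Z_{H'}$ yields a randomised approximation to the desired cut-counting quantity on $H$, contradicting NP$\ne$RP.

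The principal obstacle is the phase coexistence analysis for $\Psi$, which is the technical heart of the Sly--Sun/GSV argument. It requires a first- and second-moment computation on the bipartite random gadget to pin down the exponential order of the partition function restricted to each phase, a small-subgraph-conditioning step to upgrade expected-value bounds to concentration, and careful control of how port boundary conditions propagate into the bulk (a non-reconstruction-type estimate) so that typical boundary spins do not destabilise the intended phase. The thresholds $\lam_c(\Delta,B)$ and $B_c(\Delta)$ emerge naturally from the fixed-point analysis of the two-variable tree recursion, matching exactly the uniqueness/non-uniqueness boundary for the Ising model on the infinite $\Delta$-regular tree.
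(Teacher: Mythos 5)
This statement is not proved in the paper at all; it is cited directly from \cite{galanis2016inapproximability,SS14}, and the paper uses it as a black box in the proof of Theorem~\ref{thmanti-ferro}. Your proposal is therefore a reconstruction of the external proof rather than of any argument in this paper, which is a reasonable thing to attempt, and you do capture the broad strategy of the Sly and Sly--Sun line of work: a random bipartite $\Delta$-regular gadget with phase coexistence in the non-uniqueness regime, port vertices to communicate phases, a replacement construction that encodes a base combinatorial instance, and a first/second moment plus small-subgraph-conditioning analysis to pin down the phase weights.

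However, there is a concrete error in your choice of source problem. You propose reducing from ``approximately counting maximum cuts on bipartite $\Delta$-regular graphs'' and assert that this is classically inapproximable unless NP${}={}$RP. On a bipartite graph the maximum cut is the entire edge set and can be found (and its size computed) in polynomial time, so there is no hardness there. In the actual arguments of \cite{Sly10,SS14,galanis2016inapproximability} the gadget $\Psi$ is the bipartite object; the \emph{instance} graph $H$ into which gadgets are plugged is a general (non-bipartite) bounded-degree graph --- Sly uses cubic graphs --- and the hard problem is the decision/optimization version of MAX-CUT on such graphs, which is NP-hard. An FPRAS for the Ising partition function on the gadget-composed graph $H'$ lets one estimate the dominant phase assignment, which in turn reveals a near-optimal cut of $H$, giving NP${}={}$RP. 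With that correction (and the understanding that the detailed moment and conditioning computations, which you rightly flag as the technical heart, are carried out in the cited references rather than re-derived), the outline matches the intended argument.
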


\noindent
With these inputs, the proof of Theorem~\ref{thmanti-ferro} is essentially identical to the proof of Theorem~\ref{thm:main}.

For the algorithm, the definition of $\alpha_{\mathrm{inf}}(B, \lam_c,\Delta)$ and the assumption that $\alpha < \alpha_{\mathrm{inf}}(B, \lam_c,\Delta)$ ensures that there exists $\lam < \lam_c(\Delta, B)$ so that the expected number of $+$ spins in a sample from $\mu_{G,B,\lam}$ is $\lfloor \alpha n \rfloor$; since the Glauber dynamics mixes rapidly (Theorem~\ref{thmIsingMix}) we can use a binary search to find an approximate such $\lam$; using the zero-freeness from Theorem~\ref{thmIsingZeros} and  rapid mixing we can show that a sample from $\mu_{G,B,\lam}$ has probability $\Theta\big( \frac{1}{n \log n} \big) $ of achieving exactly the desired number of $+$ spins.  The approximate sampling algorithm is then rejection sampling with approximate samples from $\mu_{G,B,\lam}$ obtained from the Glauber dynamics.

  In the proof of hardness,  the definition of $\alpha_{\mathrm{inf}}(B, \lam_c,\Delta)$ and the assumption that $\alpha > \alpha_{\mathrm{inf}}(B, \lam_c,\Delta)$ ensures that there exists a fixed $\Delta$-regular graph $K$ with $\alpha_K(B,\lam_c) < \alpha$.  Moreover, since $\alpha_K(B,1)=1/2$ (as when $\lam=1$ the model is symmetric under swapping the spins), by continuity there exists $\lam \in ( \lam_c(\Delta,B),1)$ so that $\alpha_K(B,\lam)=\alpha$.  We then reduce the problem of approximating $Z_G(B,\lam)$ for this value of $\lam$ (which by Theorem~\ref{thmIsinghardness} is intractable unless NP=RP)  to that of approximating $c_{\lfloor \alpha n \rfloor}(G',B)$ for the graph $G'$ consisting of the union of $G$ and $r$ copies of the graph $K$, with $r = \Theta(n^2/\eps)$.  To relate $c_{\lfloor \alpha n \rfloor}(G',B)$ to $Z_G(B,\lam)$ we use  the local central limit theorem of  Theorem~\ref{thm:gnedenko} as in the proof of~\eqref{eqReductionEquation} above.

In order to give an explicit value for the computational threshold $\alpha_{\mathrm{inf}}(B, \lam_c,\Delta)$, one must solve the extremal problem. In Conjecture~\ref{conjIsingExtremal} we posit that the answer is given by the complete graph $K_{\Delta +1}$.

\providecommand{\bysame}{\leavevmode\hbox to3em{\hrulefill}\thinspace}
\providecommand{\MR}{\relax\ifhmode\unskip\space\fi MR }
\providecommand{\MRhref}[2]{%
  \href{http://www.ams.org/mathscinet-getitem?mr=#1}{#2}
}
\providecommand{\href}[2]{#2}


\begin{thebibliography}{10}

\bibitem{AL20}
Vedat~Levi Alev and Lap~Chi Lau, \emph{Improved analysis of higher order random
  walks and applications}, Proceedings of the 52nd {{Annual ACM SIGACT
  Symposium}} on {{Theory}} of {{Computing}} ({Chicago IL USA}), {ACM}, June
  2020, pp.~1198--1211.

\bibitem{ALG20}
Nima Anari, Kuikui Liu, and Shayan Oveis~Gharan, \emph{Spectral independence in
  high-dimensional expanders and applications to the hardcore model}, 2020
  {{IEEE}} 61st {{Annual Symposium}} on {{Foundations}} of {{Computer Science}}
  ({{FOCS}}), November 2020, pp.~1319--1330.

\bibitem{Bar16}
Alexander Barvinok, \emph{Combinatorics and {{Complexity}} of {{Partition
  Functions}}}, Algorithms and {{Combinatorics}}, vol.~30, {Springer
  International Publishing}, {Cham}, 2016.

\bibitem{BST16}
Nayantara Bhatnagar, Allan Sly, and Prasad Tetali, \emph{Decay of correlations
  for the hardcore model on the $d$-regular random graph}, Electronic Journal
  of Probability \textbf{21} (2016), 1--42.

\bibitem{BD97}
R.~Bubley and M.~Dyer, \emph{Path coupling: {{A}} technique for proving rapid
  mixing in {{Markov}} chains}, Proceedings 38th {{Annual Symposium}} on
  {{Foundations}} of {{Computer Science}} ({Miami Beach, FL, USA}), {IEEE
  Comput. Soc}, 1997, pp.~223--231.

\bibitem{carlson2021computational}
Charlie Carlson, Ewan Davies, Alexandra Kolla, and Will Perkins,
  \emph{Computational thresholds for the fixed-magnetization {{Ising}} model},
  Proceedings of the 54th {{Annual ACM SIGACT Symposium}} on {{Theory}} of
  {{Computing}} ({Rome Italy}), {ACM}, June 2022, pp.~1459--1472.

\bibitem{CLV20}
Zongchen Chen, Kuikui Liu, and Eric Vigoda, \emph{Rapid mixing of {G}lauber
  dynamics up to uniqueness via contraction}, 2020 {{IEEE}} 61st {{Annual
  Symposium}} on {{Foundations}} of {{Computer Science}} ({{FOCS}}), November
  2020, pp.~1307--1318.

\bibitem{CLV20a}
Zongchen Chen, Kuikui Liu, and Eric Vigoda, \emph{Optimal mixing of {G}lauber
  dynamics: Entropy factorization via high-dimensional expansion}, Proceedings
  of the 53rd Annual ACM SIGACT Symposium on Theory of Computing (STOC), 2021,
  pp.~1537--1550.

\bibitem{curticapean2019fixed}
Radu Curticapean, Holger Dell, Fedor Fomin, Leslie~Ann Goldberg, and John
  Lapinskas, \emph{A fixed-parameter perspective on \#{BIS}}, Algorithmica
  \textbf{81} (2019), 3844--3864.

\bibitem{CR14}
Jonathan Cutler and A.J. Radcliffe, \emph{The maximum number of complete
  subgraphs in a graph with given maximum degree}, Journal of Combinatorial
  Theory, Series B \textbf{104} (2014), 60--71.

\bibitem{DJP20}
Ewan Davies, Matthew Jenssen, and Will Perkins, \emph{A proof of the upper
  matching conjecture for large graphs}, Journal of Combinatorial Theory,
  Series B \textbf{151} (2021), 393--416.

\bibitem{DJPR17a}
Ewan Davies, Matthew Jenssen, Will Perkins, and Barnaby Roberts, \emph{On the
  average size of independent sets in triangle-free graphs}, Proc. Amer. Math.
  Soc. \textbf{146} (2017), 111--124.

\bibitem{DJPR18a}
Ewan Davies, Matthew Jenssen, Will Perkins, and Barnaby Roberts, \emph{Tight
  bounds on the coefficients of partition functions via stability}, Journal of
  Combinatorial Theory, Series A \textbf{160} (2018), 1--30.

\bibitem{DP21a}
Ewan Davies and Will Perkins, \emph{Approximately {{Counting Independent Sets}}
  of a {{Given Size}} in {{Bounded}}-{{Degree Graphs}}}, 48th {{International
  Colloquium}} on {{Automata}}, {{Languages}}, and {{Programming}} ({{ICALP}}
  2021) ({Dagstuhl, Germany}) (Nikhil Bansal, Emanuela Merelli, and James
  Worrell, eds.), Leibniz {{International Proceedings}} in {{Informatics}}
  ({{LIPIcs}}), vol. 198, {Schloss Dagstuhl \textendash{} Leibniz-Zentrum f\"ur
  Informatik}, 2021, pp.~62:1--62:18.

\bibitem{dyer2004relative}
Martin Dyer, Leslie~Ann Goldberg, Catherine Greenhill, and Mark Jerrum,
  \emph{The relative complexity of approximate counting problems}, Algorithmica
  \textbf{38} (2004), 471--500.

\bibitem{ERS20}
Talya Eden, Dana Ron, and C.~Seshadhri, \emph{On {{Approximating}} the
  {{Number}} of {$k$}-{{Cliques}} in {{Sublinear Time}}}, SIAM Journal on
  Computing \textbf{49} (2020), 747--771.

\bibitem{EG04}
Friedrich Eisenbrand and Fabrizio Grandoni, \emph{On the complexity of fixed
  parameter clique and dominating set}, Theoretical Computer Science
  \textbf{326} (2004), 57--67.

\bibitem{GSV15}
Andreas Galanis, Daniel {\v S}tefankovi{\v c}, and Eric Vigoda,
  \emph{Inapproximability for antiferromagnetic spin systems in the tree
  nonuniqueness region}, J. ACM \textbf{62} (2015), 1--60 (en).

\bibitem{galanis2016inapproximability}
Andreas Galanis, Daniel {\v S}tefankovi{\v c}, and Eric Vigoda,
  \emph{Inapproximability of the partition function for the antiferromagnetic
  {I}sing and hard-core models}, Combinatorics, Probability and Computing
  \textbf{25} (2016), 500--559.

\bibitem{Geo11}
Hans-Otto Georgii, \emph{Gibbs measures and phase transitions}, 2nd ed ed., De
  {{Gruyter}} Studies in Mathematics, no.~9, {De Gruyter}, {Berlin ; New York},
  2011.

\bibitem{Gne48}
B.~V. Gnedenko, \emph{On a local limit theorem of the theory of probability},
  Uspehi Matem. Nauk (N. S.) \textbf{3} (1948), 187--194.

\bibitem{Gre00}
Catherine Greenhill, \emph{The complexity of counting colourings and
  independent sets in sparse graphs and hypergraphs}, Comput. complex.
  \textbf{9} (2000), 52--72.

\bibitem{HK20}
David~G Harris and Vladimir Kolmogorov, \emph{Parameter estimation for {G}ibbs
  distributions}, arXiv preprint arXiv:2007.10824 (2020).

\bibitem{jain2021approximate}
Vishesh Jain, Will Perkins, Ashwin Sah, and Mehtaab Sawhney, \emph{Approximate
  counting and sampling via local central limit theorems}, Proceedings of the
  54th {{Annual ACM SIGACT Symposium}} on {{Theory}} of {{Computing}} ({New
  York, NY, USA}), {{STOC}} 2022, {Association for Computing Machinery}, June
  2022, pp.~1473--1486.

\bibitem{jerrum1989approximating}
Mark Jerrum and Alistair Sinclair, \emph{Approximating the permanent}, SIAM
  Journal on Computing \textbf{18} (1989), 1149--1178.

\bibitem{JS93}
Mark Jerrum and Alistair Sinclair, \emph{Polynomial-time approximation
  algorithms for the {I}sing model}, SIAM Journal on Computing \textbf{22}
  (1993), 1087--1116.

\bibitem{JS97}
Mark Jerrum and Alistair Sinclair, \emph{The {{Markov}} chain {{Monte Carlo}}
  method: An approach to approximate counting and integration}, Approximation
  {{Algorithms}} for {{NP}}-Hard {{Problems}}, {PWS Pub. Co}, 1997.

\bibitem{Kel85}
F.~P. Kelly, \emph{Stochastic models of computer communication systems},
  Journal of the Royal Statistical Society. Series B (Methodological)
  \textbf{47} (1985), 379--395.

\bibitem{LPRS16}
J.L. Lebowitz, B.~Pittel, D.~Ruelle, and E.R. Speer, \emph{Central limit
  theorems, {{Lee}}\textendash{{Yang}} zeros, and graph-counting polynomials},
  Journal of Combinatorial Theory, Series A \textbf{141} (2016), 147--183.

\bibitem{LLY13}
Liang Li, Pinyan Lu, and Yitong Yin, \emph{Correlation decay up to uniqueness
  in spin systems}, Proceedings of the Twenty-Fourth Annual {{ACM}}-{{SIAM}}
  Symposium on {{Discrete}} Algorithms ({USA}), {{SODA}} '13, {Society for
  Industrial and Applied Mathematics}, January 2013, pp.~67--84.

\bibitem{MS19}
Marcus Michelen and Julian Sahasrabudhe, \emph{Central limit theorems and the
  geometry of polynomials}, arXiv preprint arXiv:1908.09020 (2019).

\bibitem{MS19a}
Marcus Michelen and Julian Sahasrabudhe, \emph{Central limit theorems from the
  roots of probability generating functions}, Advances in Mathematics
  \textbf{358} (2019), 106840.

\bibitem{PR17}
Viresh Patel and Guus Regts, \emph{Deterministic polynomial-time approximation
  algorithms for partition functions and graph polynomials}, SIAM J. Comput.
  \textbf{46} (2017), 1893--1919.

\bibitem{PR19}
Han Peters and Guus Regts, \emph{On a conjecture of {S}okal concerning roots of
  the independence polynomial}, Michigan Math. J. \textbf{68} (2019), 33--55.

\bibitem{SS20b}
Shuai Shao and Yuxin Sun, \emph{Contraction: A unified perspective of
  correlation decay and zero-freeness of 2-spin systems}, 47th {{International
  Colloquium}} on {{Automata}}, {{Languages}}, and {{Programming}} ({{ICALP}}
  2020) ({Dagstuhl, Germany}) (Artur Czumaj, Anuj Dawar, and Emanuela Merelli,
  eds.), Leibniz {{International Proceedings}} in {{Informatics}} ({{LIPIcs}}),
  vol. 168, {Schloss Dagstuhl\textendash Leibniz-Zentrum f\"ur Informatik},
  2020, pp.~96:1--96:15.

\bibitem{SJ89}
Alistair Sinclair and Mark Jerrum, \emph{Approximate counting, uniform
  generation and rapidly mixing {{Markov}} chains}, Information and Computation
  \textbf{82} (1989), 93--133.

\bibitem{SST14}
Alistair Sinclair, Piyush Srivastava, and Marc Thurley, \emph{Approximation
  algorithms for two-state anti-ferromagnetic spin systems on bounded degree
  graphs}, Journal of Statistical Physics \textbf{155} (2014), 666--686 (en).

\bibitem{Sly10}
Allan Sly, \emph{Computational transition at the uniqueness threshold}, 2010
  {{IEEE}} 51st {{Annual Symposium}} on {{Foundations}} of {{Computer Science}}
  ({Las Vegas, NV, USA}), {IEEE}, October 2010, pp.~287--296.

\bibitem{SS14}
Allan Sly and Nike Sun, \emph{Counting in two-spin models on d-regular graphs},
  Ann. Probab. \textbf{42} (2014), 2383--2416.

\bibitem{Val79}
Leslie~G. Valiant, \emph{The complexity of enumeration and reliability
  problems}, SIAM J. Comput. \textbf{8} (1979), 410--421.

\bibitem{Wei06}
Dror Weitz, \emph{Counting independent sets up to the tree threshold},
  Proceedings of the Thirty-Eighth Annual {{ACM}} Symposium on {{Theory}} of
  Computing - {{STOC}} '06 ({Seattle, WA, USA}), {ACM Press}, 2006, p.~140.

\bibitem{zhao2017extremal}
Yufei Zhao, \emph{Extremal regular graphs: independent sets and graph
  homomorphisms}, The American Mathematical Monthly \textbf{124} (2017),
  827--843.

\end{thebibliography}
\end{document}